%% file: main.tex
\documentclass[12pt]{article}
\usepackage{graphicx} 
\usepackage{amsmath,amssymb,amsthm}
\usepackage{array}
\usepackage{lmodern}
\usepackage{iftex}
\usepackage[round,authoryear]{natbib}
\usepackage[hidelinks]{hyperref}

\newtheorem{theorem}{Theorem}[section]
\newtheorem{proposition}[theorem]{Proposition}
\newtheorem{lemma}[theorem]{Lemma}

\usepackage{authblk}

\begin{document}

\def\spacingset#1{\renewcommand{\baselinestretch}%
{#1}\small\normalsize} \spacingset{1}


\title{From Cumulative Weights to Marginal Density Ratios: Per-Protocol Estimation in Sequential Target Trial Emulation}
\renewcommand\Authands{, }
  \author[1]{Zern Ke}
  \author[1]{Mingshi Cui}
  \author[2]{Feng Dai}
  \author[2]{Birol Emir}
  \author[1]{Javier Cabrera}
  \author[2]{Demissie Alemayehu}
  \affil[1]{Department of Statistics, Rutgers University, New Brunswick}
  \affil[2]{Pfizer Inc, New York, NY, USA}

\date{ }
\maketitle

\begin{abstract}
Sequential target trial emulation evaluates eligibility at multiple
baseline times to emulate a sequence of randomized trials using
observational data. Estimating per-protocol effects in this setting is
challenging because treatment deviations and loss to follow-up induce
selection among individuals who remain observed and adherent over time.
Conventional inverse-probability methods address this selection using
cumulative weights constructed from estimated adherence and censoring
probabilities, but these weights can be highly variable, leading to
unstable and imprecise effect estimates. We propose a
different approach based on marginal density ratios (MDRs). 
The MDR directly compares the state distribution
among individuals who would remain event-free under a target treatment
strategy with the corresponding distribution among observed-adherent
individuals. We use longitudinal g-computation to generate the target
risk sets and a probabilistic classifier to estimate density ratios for
reweighting the observed outcomes. Building on this approach, we also
develop a doubly robust extension. Favorable performance across the
simulation study suggests that MDR weighting is a promising alternative
to cumulative longitudinal weights when its identification assumptions
are plausible.
\end{abstract}

\newpage
\spacingset{1.5} 

\input{introduction.tex}

\input{related_work}

\input{methodology}

\input{simulation_study}

\input{discussion}

\clearpage
\spacingset{1.1} 
\bibliographystyle{custom-apalike}
\bibliography{reference}

\clearpage
\appendix
\input{appendix}

\end{document}

%% file: introduction.tex
\section{Introduction}

\subsection{Background and Motivation}

Randomized controlled trials provide the most direct way to compare
treatment strategies, but a suitable trial is not always available.
Trials can be expensive, require years of follow-up, raise ethical
concerns, or enroll patients who differ from those seen in routine
practice. Longitudinal observational data, including electronic health
records, insurance claims, and clinical registries, can provide
important additional evidence. These sources often contain repeated
measurements of treatment, patient characteristics, and clinical
outcomes. However, drawing causal conclusions from these data is
difficult because treatment is not randomly assigned and may change as
a patient's health changes
\citep{hernan2016targettrial,su2024trialemulation}.

The target trial framework helps researchers design an observational
study around the randomized trial they would ideally conduct. They
first specify a protocol with the eligibility criteria, treatment
strategies, assignment procedures, start and end of follow-up, outcome,
causal contrast, and analysis plan. They then use the available
observational data to emulate this protocol as closely as possible.
Aligning eligibility, treatment assignment, and the start of follow-up
at a common time zero can reduce design errors such as immortal-time
bias. It also makes the causal question easier to understand and
evaluate \citep{hernan2016targettrial}.

A longitudinal database may contain more information than can be used
in a trial emulated from only one baseline. Treatment decisions are
often made repeatedly, and a patient may be eligible to start treatment
at several times. An analysis based on a single baseline may therefore
miss later treatment opportunities, include few treatment initiators,
and depend heavily on the chosen cohort-entry time
\citep{keogh2023sequentialtrials}.

Sequential target trial emulation addresses this limitation by
emulating a series of trials with different calendar starting times. The
trials follow the same general protocol, but eligibility is reassessed
at the start of each trial. An eligible individual enters a new
person-trial and is followed from that treatment decision. The same
individual may contribute to several trials while remaining eligible.
This design reflects clinical practice, in which patients may have
repeated opportunities to initiate treatment or remain untreated
\citep{keogh2023sequentialtrials,su2024trialemulation}.

This process creates aligned person-trial records from one longitudinal
cohort. Each record has its own baseline covariates, treatment
assignment, and follow-up time. The records can then be pooled across
sequential trials. In this paper, each eligible person-trial receives equal
baseline weight. At a prespecified horizon \(\tau\), the primary
estimand is the pooled per-protocol cumulative-risk difference: the risk
of the outcome by \(\tau\) under sustained treatment minus the
corresponding risk under sustained no treatment.
The target population therefore consists of eligible person-trial
entries, not unique individuals. This design uses repeated eligibility
and treatment opportunities while maintaining a clear time zero for
each entry. It preserves the correct order of eligibility, covariate
measurement, treatment, and outcome follow-up. This pooled person-trial
structure is the basis of our analysis.

Sequential target trial emulation builds on methods for longitudinal
causal inference. The longitudinal g-formula can be used to estimate the
effects of sustained treatment strategies when treatment and
confounders change over time
\citep{robins1986gformula}. Marginal structural models and
inverse-probability weighting address treatment-confounder feedback.
This feedback occurs when a time-varying covariate affects future
treatment and the outcome but is itself affected by earlier treatment
\citep{robins2000msm}. Other methods include parametric g-computation
and structural nested models
\citep{daniel2013timedependent,young2011parametricgformula}, as well as
targeted and doubly robust estimators
\citep{lendle2017ltmle,bang2005doublyrobust}. The target trial framework
provides a study-design structure for these methods, and sequential
emulation extends that structure to repeated treatment opportunities.
Section~\ref{sec:related-work} provides a fuller review.

Good study design does not by itself solve the estimation problem,
especially for per-protocol effects. During follow-up, individuals may
stop or switch treatment, otherwise deviate from the target strategy,
or be lost to follow-up. Those who remain event-free, observed, and
adherent may differ from those who would remain at risk if everyone
followed the target strategy. Restricting the analysis to
observed-adherent individuals can therefore produce bias
\citep{keogh2023sequentialtrials,cole2008constructing}.

A common solution is cumulative inverse-probability weighting based on
the estimated probabilities of adherence and continued observation.
Unstabilized weights multiply inverse conditional probabilities across
follow-up visits. Stabilized weights include numerator probabilities
intended to reduce weight variability
\citep{robins2000msm,cole2008constructing}. However, stabilization does
not remove the underlying product of many probabilities. Weights can still
be highly variable when adherence is limited, treatment strongly
depends on patient history, follow-up is long, or treatment overlap is
weak. These methods also reconstruct the target population one
treatment and censoring decision at a time. Our method instead focuses
on the state distribution at each follow-up interval among individuals
who would remain event-free under the target strategy. We ask whether
this target risk-set distribution can be linked directly to the
observed-adherent risk-set distribution, without recovering it through
a long product of sequential probabilities.

Off-policy evaluation has used marginal state-distribution ratios in
place of products of trajectory-level importance ratios
\citep{liu2018breakinghorizon,xie2019mis}. We adapt this idea to the
counterfactual target and observed-adherent risk sets of sequential
target trials. We call the resulting method sequential target trial
emulation using marginal density ratios (STTE--MDR). STTE--MDR still
requires longitudinal models. It uses fitted event and state-transition
models to generate strategy-specific target risk sets, then uses a
classifier to estimate the direct density ratio. Thus, it avoids
modeling and multiplying a sequence of adherence and censoring
probabilities.

This reduction from full histories to current states requires several
conditions. In addition to the standard causal identification
assumptions, the state must retain the information needed to model the
conditional event risk and, when states are propagated over time, their
evolution among survivors.
The conditional event-risk relationship estimated in the
observed-adherent risk set must also apply to the target risk set.
Finally, target states must have support in the observed-adherent state
distribution, and the target-state generator must consistently estimate
the target risk-set distributions.

\subsection{Contributions}

This paper makes four main contributions. First, we frame per-protocol
estimation in sequential target trials as a risk-set transport problem.
The goal is to transport the observed-adherent state distribution to
the strategy-specific target state distribution. Second, we define a
direct marginal density ratio that captures the effects of earlier
adherence and censoring on the current risk-set distribution, without
forming cumulative products of treatment and censoring probabilities.
The population MDR is the conditional average of the full-history ratio
among histories with the same current state. Its population variance is
therefore no greater than that of the full-history ratio. This result
does not guarantee lower finite-sample variance for the estimated
treatment effect. Third, we give an implementable STTE--MDR algorithm that
combines longitudinal target-state simulation, follow-up-balanced
classifier-based density-ratio estimation, cell-normalized weighting,
and standardization of discrete-time event hazards. Fourth, we
establish identification results, relate the direct MDR to cumulative
full-history weights, and develop a doubly robust extension. Conditional
on correct generation of the target risk-set distribution, this
extension remains consistent if either the outcome regression or the
MDR is correctly specified. We evaluate the proposed estimators under
poor adherence, limited overlap, strong confounding, small samples,
rare outcomes, and nonlinear model structures scenarios.

\subsection{Organization of the Paper}

The remainder of the paper is organized as follows.
Section~\ref{sec:related-work} reviews prior work on sequential target
trials and major longitudinal causal
methods, including weighting and balancing, g-computation, density-ratio,
targeted, and doubly robust estimation.
Section~\ref{sec:methodology} defines the sequential target trial setup,
causal estimands, and identification assumptions. It then develops the
STTE--MDR and doubly robust estimators and presents their theoretical
properties. Section~\ref{sec:simulation-study} describes the simulation
study and compares the proposed methods with existing approaches under
common challenges in longitudinal observational studies.
Section~\ref{sec:discussion} discusses the main findings, practical
implications, limitations, and directions for future research.

%% file: related_work.tex
\section{Related Work}
\label{sec:related-work}

\subsection{Sequential Target Trial Emulation and Per-Protocol Effects}
\label{sec:related-stte}

The methodological foundations of sequential target trial emulation were
established through earlier applications of trial-based analyses to longitudinal
observational data. \citet{hernan2008observational} analyzed observational
follow-up as a sequence of trials, allowing eligible individuals to enter the
analysis at multiple treatment-decision times. \citet{hernan2016targettrial}
subsequently formalized the target trial framework and emphasized the need to
specify the protocol of the hypothetical randomized trial before designing the
observational analysis. Together, these contributions provided the conceptual
basis for modern sequential target trial emulation.

\citet{keogh2023sequentialtrials} provided a detailed methodological comparison of
the sequential trials approach and marginal structural models estimated using
inverse-probability-of-treatment weights. They showed how both approaches can
be used to estimate marginal survival contrasts under sustained treatment
strategies. Their results indicated that sequential trials may produce less
extreme weights and greater efficiency at many follow-up times, although these
advantages are not uniform and depend on the follow-up period and model
specification.

\citet{su2024trialemulation} developed the \texttt{TrialEmulation} package to
support the practical implementation of sequential target trial analyses. The
package constructs person-trial data, estimates weights for treatment switching
and dependent censoring, fits marginal structural outcome models, and estimates
intention-to-treat and per-protocol risk differences.
\citet{limozin2025inference} further studied inference for sequential trial
emulation by comparing confidence intervals based on sandwich variance
estimators, bootstrap procedures, and the jackknife. Despite these developments,
standard weighted implementations of sequential-trial per-protocol analyses,
including those emphasized above, typically rely on cumulative
treatment-adherence and censoring weights. The construction and limitations
of these weights are discussed in the next subsection.

\subsection{Cumulative Weighting and Longitudinal Balancing}
\label{sec:related-weighting}

Marginal structural models were introduced to estimate causal effects when
treatment and confounders vary over time, including settings with
treatment--confounder feedback. In such settings, a time-varying confounder
may affect future treatment and the outcome while also being affected by
earlier treatment. Direct adjustment for this variable in a conventional
outcome model may therefore block part of the treatment effect or introduce
bias. \citet{robins2000msm} introduced inverse-probability weighting for
marginal structural models to address this problem. The unstabilized weights
create a pseudo-population in which treatment no longer depends on the
measured time-varying confounder history. Related censoring weights adjust
for selection when continued observation depends on the measured history.

In longitudinal analyses, treatment and censoring weights are multiplied
across follow-up. A generic unstabilized weight through visit \(k\) is
\begin{equation}
\label{eq:unstabilized_weight}
W_{ik}
=
\prod_{j=0}^{k}
\frac{1}{\Pr\!\left(A_{ij}\mid H_{ij}\right)}
\prod_{j=0}^{k-1}
\frac{1}{
    \Pr\!\left(C_{i,j+1}=0\mid H_{ij},A_{ij}\right)
}.
\end{equation}
Here, \(i\) indexes individuals, \(j\) indexes follow-up visits, and \(k\)
is the current follow-up time. \(A_{ij}\) denotes treatment at visit \(j\),
\(H_{ij}\) is the measured treatment and covariate history available at that
visit, and \(C_{i,j+1}=0\) indicates that the individual remains uncensored
after visit \(j\). The first product adjusts for treatment or adherence, and
the second adjusts for censoring. Conditioning on being uncensored and at risk at the beginning of each visit is
suppressed in Eq.~\ref{eq:unstabilized_weight} for simplicity. In per-protocol
analyses, deviations from the assigned strategy can be handled in different
ways. Some methods represent them through treatment-switching weights
\citep{su2024trialemulation}, whereas others encode them as artificial
censoring at the time of deviation \citep{hernan2008observational}.
Unstabilized weights can become large when any of the estimated probabilities
is small.

\citet{robins2000msm} also presented stabilized weights. These weights add
numerator probabilities based on a smaller set of variables:
\begin{equation}
\label{eq:stabilized_weight}
SW_{ik}
=
\prod_{j=0}^{k}
\frac{
    \Pr\!\left(A_{ij}\mid \bar{A}_{i,j-1},V_i\right)
}{
    \Pr\!\left(A_{ij}\mid H_{ij}\right)
}
\prod_{j=0}^{k-1}
\frac{
    \Pr\!\left(C_{i,j+1}=0\mid \bar{A}_{ij},V_i\right)
}{
    \Pr\!\left(C_{i,j+1}=0\mid H_{ij},A_{ij}\right)
}.
\end{equation}
Here, \(\bar{A}_{i,j-1}\) is the treatment history before visit \(j\), and
\(V_i\) contains the baseline variables used to stabilize the weights and
included in the marginal structural outcome model. Stabilization preserves
treatment and censoring patterns explained by these variables and usually
keeps the weights closer to one. It can therefore reduce variability and
improve precision relative to unstabilized weighting.

Stabilized weights still multiply probabilities across follow-up. They may
therefore remain highly variable when adherence is limited, overlap is weak,
censoring is common, or follow-up is long. Weight truncation is often used to
reduce the influence of extreme values, although it introduces a
bias--variance tradeoff. \citet{cole2008constructing} provided detailed
guidance on weight construction, stabilization, positivity, diagnostics, and
truncation.

Later methods placed greater emphasis on covariate balance.
\citet{imai2015robust} extended the covariate balancing propensity score to
longitudinal treatments. Their method retains sequential treatment models and
cumulative inverse-probability weights, but estimates the model parameters
using balance conditions rather than maximum likelihood alone. At each visit,
the method aims to balance measured covariates across current and future
treatment assignments among individuals with the same past treatment history.
This construction can reduce sensitivity to treatment-model misspecification,
although the weights remain products of sequential treatment probabilities.

\citet{yiu2022joint} jointly calibrated inverse-probability treatment and
censoring weights for marginal structural models. Starting with
cumulative treatment-and-censoring weights, their method applies an
exponential calibration factor chosen to satisfy predefined balance conditions for treatment and censoring simultaneously. After calibration, treatment is less
associated with measured covariate history, and the weighted uncensored sample
better represents the
population that would remain without censoring.

\citet{zhou2020residual} proposed residual balancing as an alternative to
modeling the full treatment-assignment distribution. Their method first
estimates the conditional means of the time-varying confounders given the
earlier history. It then constructs weights that balance the resulting
confounder residuals across later treatment values and relevant treatment
histories. This approach can be easier to apply with continuous treatments
and may be less sensitive to misspecification of the treatment model. Its
validity, however, depends on the conditional-mean models for the confounders
and the selected balance conditions. Residual balancing remains a method for
constructing weights for a marginal structural model.

Kernel optimal weighting provides a more flexible balance-based approach. 
\citet{kallus2021optimal} proposed selecting longitudinal weights by jointly 
minimizing imbalance in time-varying confounders and penalizing weight 
dispersion. The framework can also be extended to informative censoring. 
Its published formulation targets longitudinal marginal structural model
estimation rather than sequential target trial emulation.
Like STTE--MDR, kernel optimal weighting avoids direct estimation of
sequential treatment probabilities and focuses on distributional alignment.
Kernel optimal weighting balances time-varying confounders across observed treatment 
histories for a marginal structural model, with a penalty for unstable weights, whereas STTE--MDR directly aligns 
the observed-adherent state distribution with the strategy-specific target 
risk-set distribution. Thus, the former defines weights through an imbalance
objective, whereas the latter estimates an explicit risk-set change of
measure.

\subsection{Density-Ratio Transport and Marginal State Weighting}
\label{sec:related-density-ratio}

Both covariate-balancing and density-ratio methods seek weights that transport
an observed distribution \(P^{A}\) toward a target distribution \(P^{T}\).
They differ mainly in the procedure used to estimate the weights. Covariate-balancing
methods construct weights by imposing or minimizing imbalance over selected
moments or function classes, whereas density-ratio methods directly estimate
the change of measure \(r(s)=dP^{T}/dP^{A}(s)\). If this ratio is correct, then
\[
E_{P^{A}}\{r(S)f(S)\}=E_{P^{T}}\{f(S)\}
\]
for every integrable \(f\), so the reweighted population reproduces the target
distribution of the chosen state \citep{qin1998densityratio}.

A probabilistic classifier provides a method to estimate the target-to-source density ratio without modeling the two densities separately. Target and source observations are pooled, labeled by their origin, and used to fit the classifier. After adjustment for the sample class proportions, the classifier odds give the target-to-source density ratio \citep{bickel2009covariateshift,menon2016densityratio}. 

Causal transport uses a similar change-of-measure idea to reweight trial participants toward a target population, often using baseline covariates; outcome regression can be added for doubly robust estimation \citep{westreich2017transportability,dahabreh2020extending}. STTE--MDR applies this idea within longitudinal follow-up rather than between external populations. For each strategy and follow-up interval, it transports the observed-adherent risk-set distribution toward the corresponding counterfactual survivor risk-set distribution.

Marginal state weighting also appears in off-policy evaluation. Conventional
importance sampling represents a target policy through products of
action-probability ratios along observed trajectories, which can become
unstable over long horizons. \citet{liu2018breakinghorizon} replaced these
trajectory products with ratios of stationary state-visitation distributions,
and \citet{xie2019mis} developed a finite-horizon method based on
time-specific target-policy state distributions. These studies establish the principle of replacing cumulative trajectory ratios with marginal state-visitation ratios. In off-policy evaluation, marginalization can reduce how quickly variance grows as the time horizon increases, but it is not guaranteed to reduce variance at a fixed finite horizon \citep{liu2020conditional}. 

STTE--MDR applies marginal state weighting to a different pair of
distributions: the counterfactual survivor states under each strategy and the
states observed among individuals who remain adherent and uncensored. Because
these risk sets evolve, it estimates a separate ratio for each strategy and
follow-up interval. The unobserved target states are first generated using
longitudinal transition and event models.

\subsection{Longitudinal g-Computation and Targeted Estimation}
\label{sec:related-gcomp-targeted}

The longitudinal g-formula identifies counterfactual outcomes under sustained
treatment strategies \citep{robins1986gformula,daniel2013timedependent}. Parametric g-computation estimates
it by simulating covariate and outcome trajectories under each strategy
\citep{young2011parametricgformula,mcgrath2020gformula}. Standard applications
average the simulated outcomes to estimate counterfactual risks. STTE--MDR
also retains the survivor states at each interval, using them as the target
risk sets for density-ratio estimation.

Several doubly robust methods combine outcome modeling with a second
adjustment mechanism. \citet{bang2005doublyrobust} developed augmented
estimators that combine sequential outcome regressions with
treatment-and-censoring models. \citet{tran2019longitudinalestimators}
compared implementations based on iterated conditional expectations,
inverse-probability weighting, augmented weighting, and targeted estimation.
For population transport, \citet{dahabreh2020extending} averaged trial-based outcome predictions over the target population and added a weighted correction using trial outcomes. For continuous-time survival outcomes, \citet{chatton2022gcomputation} combined a propensity-weighted outcome model with g-computation. In off-policy
evaluation, \citet{kallus2020drl} combined marginalized density ratios with
value-function models. We propose a doubly robust extension of STTE--MDR,
called DR-STTE--MDR. It uses forward g-computation to generate the target
survivor states, averages predicted interval hazards over those states, and
adds an MDR-weighted residual correction based on the observed-adherent risk
set. 

Longitudinal targeted maximum likelihood estimation updates sequential
outcome regressions using the cumulative treatment-and-censoring mechanism to
target the intervention-specific parameter
\citep{vanderlaan2012longitudinaltmle,petersen2014longitudinaltmle}. It
supports static and dynamic strategies and right censoring and is implemented
in the \texttt{ltmle} package \citep{lendle2017ltmle}.
\citet{limozin2026calibratedltmle} combined this approach with the joint
calibration method of \citet{yiu2022joint} for per-protocol target trial
emulation. 

STTE--MDR brings together sequential target trial emulation, longitudinal
g-computation, and marginal density-ratio estimation. It frames per-protocol
estimation in pooled sequential trials as transporting observed-adherent risk
sets to strategy-specific target survivor risk sets. Rather than constructing
weights from cumulative treatment, adherence, and censoring probabilities, it
generates the target states and directly estimates a density ratio within each
strategy--follow-up cell. Its doubly robust extension,
DR-STTE--MDR, adds an MDR-weighted residual correction to outcome predictions
averaged over the target states. Conditional on consistent generation of the
target risk sets, this estimator remains consistent if either the outcome
regression or the MDR is correctly specified.

%% file: methodology.tex

\section{Methodology}
\label{sec:methodology}

\subsection{Sequential Target Trial Setup and Notation}
\label{subsec:setup-notation}

\subsubsection{Target trial specification}
\label{subsubsec:target-trial-specification}

We emulate the same target-trial protocol at several calendar times.
Let \(m\in\mathcal{M}\) index the resulting emulated trials. In every
trial, the protocol specifies the same eligibility criteria, treatment
strategies, follow-up period, outcome, causal contrast, and analysis
\citep{hernan2016targettrial,keogh2023sequentialtrials}.
Let \(E_{im}=1\) indicate that individual \(i\) satisfies the eligibility
criteria at the baseline of trial \(m\). Because eligibility is
reassessed at the start of each trial, the same individual may contribute
to more than one emulated trial. A candidate trial is included only if at
least one individual is eligible at its starting time.

Within trial \(m\), scheduled visits occur at
\[
  t_{m0}<t_{m1}<\cdots<t_{m\tau},
\]
where \(t_{m0}\) denotes trial baseline and \(t_{m\tau}\) denotes the
end of follow-up. Follow-up intervals are indexed by
\(k=0,\ldots,\tau-1\), with interval \(k\) defined as
\([t_{mk},t_{m,k+1})\).

Let \(z\in\mathcal{Z}\) index a prespecified treatment strategy. We
focus on two sustained strategies,
\(\mathcal{Z}=\{0,1\}\), such as never treated and always treated.
More generally, strategy \(z\) may be represented by a rule \(g_z\)
that maps the history available at each visit to the treatment required
by that strategy.

\subsubsection{Observed longitudinal data and sequential trial emulation}
\label{subsubsec:observed-data}

Suppose the observational database contains \(n\) individuals indexed
by \(i=1,\ldots,n\), and assume independence across individuals. We
adapt the longitudinal notation of \citet{su2024trialemulation} by
adding the trial index \(m\). Let \(V_i\) be a vector of
time-invariant baseline covariates.
For an individual eligible for trial \(m\), let \(L_{imk}\) denote the
time-varying covariates measured at visit \(t_{mk}\); thus, \(L_{im0}\)
contains their values at the baseline of trial \(m\). If a covariate is
not measured at every visit, \(L_{imk}\) may include its most recently
observed value together with the time elapsed since that measurement.

Let \(A_{imk}\) denote treatment during interval
\([t_{mk},t_{m,k+1})\). For binary treatment, \(A_{imk}=1\) denotes
treatment and \(A_{imk}=0\) denotes no treatment. Let \(Y_{imk}\) be
the cumulative event indicator by the beginning of interval \(k\), and
assume that the event is absorbing. That is, once the event has
occurred, the cumulative event indicator remains equal to one at all
subsequent visits. The event indicator for interval \(k\) is
\begin{equation}
  \Delta Y_{im,k+1}
  =
  Y_{im,k+1}-Y_{imk}.
  \label{eq:interval-event-increment}
\end{equation}
Among individuals who are event-free at the start of interval \(k\),
\(\Delta Y_{im,k+1}=1\) indicates that the event occurs during that
interval.
Let \(C_{im,k+1}=1\) indicate loss to follow-up after the outcome
status for interval \(k\) is recorded, and let \(C_{im,k+1}=0\)
otherwise.
We adopt the within-interval ordering
\[
  \left(
    L_{imk},
    A_{imk},
    \Delta Y_{im,k+1},
    C_{im,k+1}
  \right).
\]

An overbar denotes history within an emulated trial. For example,
\[
  \overline{L}_{imk}
  =
  (L_{im0},\ldots,L_{imk}),
  \qquad
  \overline{A}_{imk}
  =
  (A_{im0},\ldots,A_{imk}).
\]
Because \(C_{im,k+1}\) is recorded after interval \(k\), define the
censoring history available before treatment assignment at visit \(k\) as
\[
  \overline{C}_{im,1:k}
  =
  (C_{im1},\ldots,C_{imk}),
  \qquad
  \overline{C}_{im,1:0}=\varnothing.
\]
The information available immediately before treatment assignment at
visit \(k\) is
\begin{equation}
  H_{imk}
  =
  \left(
    V_i,
    \overline{L}_{imk},
    \overline{A}_{im,k-1},
    Y_{imk},
    \overline{C}_{im,1:k},
    m,
    k
  \right).
  \label{eq:observed-history}
\end{equation}
At \(k=0\), the treatment history
\(\overline{A}_{im,-1}\) is also empty.

With no grace period, an eligible person-trial is initially classified
according to the treatment received at trial entry.
For the per-protocol analysis, each person-trial contributes follow-up
until the first outcome, treatment-strategy deviation, loss to
follow-up, or administrative end of follow-up. An individual
may contribute to multiple sequential trials \citep{keogh2023sequentialtrials,su2024trialemulation}.

\subsubsection{Adherence and observed risk sets}
\label{subsubsec:risk-sets}

For strategy \(z\), define adherence through the treatment assigned at
visit \(k\) by
\begin{equation}
  D_{imk}^{z}
  =
  \prod_{\ell=0}^{k}
  I\!\left\{
    A_{im\ell}=g_z(H_{im\ell})
  \right\},
  \qquad
  D_{im,-1}^{z}=1.
  \label{eq:adherence-indicator}
\end{equation}
For a sustained static strategy, \(g_z(H_{im\ell})=z\) for every
\(\ell\). Thus, \(D_{imk}^{z}=1\) only if every observed treatment
decision from baseline through visit \(k\) agrees with strategy \(z\).
Define the event-free and uncensored indicators at the
beginning of interval \(k\) by
\begin{equation}
  \begin{gathered}
    R_{imk}
    =
    I(Y_{imk}=0),\\[4pt]
    \left\{
    \begin{aligned}
      O_{im0}
      &=
      1,
      &&\text{initial condition},\\
      O_{im,k+1}
      &=
      O_{imk}I(C_{im,k+1}=0),
      &&k=0,\ldots,\tau-1.
    \end{aligned}
    \right.
  \end{gathered}
  \label{eq:observed-risk-indicator}
\end{equation}
Define
\begin{equation}
  J_{imk}^{z}
  =
  E_{im}R_{imk}O_{imk}D_{imk}^{z}.
  \label{eq:observed-adherent-row}
\end{equation}
Thus, \(J_{imk}^{z}=1\) identifies individuals who are eligible,
event-free, uncensored at the start of interval \(k\), and adherent to
strategy \(z\) through the current treatment decision.

\subsubsection{History and state representations}
\label{subsubsec:state-representation}

The complete history \(H_{imk}\) may be too high dimensional to model
directly. We therefore summarize it by a prespecified or learned
predecision state
\begin{equation}
  S_{imk}=s(H_{imk}) \in \mathcal{S},
  \label{eq:state-representation}
\end{equation}
where \(s(\cdot)\) maps the observed history into a state space
\(\mathcal{S}\). The state may contain \(V_i\), recent values of \(L\),
and other summaries needed to predict future covariate evolution and
event risk. The state \(S_{imk}\) contains only the selected
covariate-history summaries. Trial index \(m\) and follow-up interval
\(k\) are included separately in pooled models. The current treatment
\(A_{imk}\) is not part of the state
because treatment is assigned after \(S_{imk}\) is measured.

Let \(H_{imk}^{z}\) denote the counterfactual history that would arise
if the individual followed strategy \(z\) and loss to follow-up were
prevented. Its
corresponding counterfactual state is
\begin{equation}
  S_{imk}^{z}=s(H_{imk}^{z}).
  \label{eq:counterfactual-state}
\end{equation}

\subsection{Causal Estimands}
\label{subsec:causal-estimands}

\subsubsection{Counterfactual outcomes and component risks}
\label{subsubsec:strategy-specific-risks}

To define the pooled estimand, we first introduce the counterfactual
risk within each contributing trial.
Let \(Y_{imk}^{z}\) denote the cumulative event indicator by the
beginning of interval \(k\) if individual \(i\), eligible for trial
\(m\), were to follow strategy \(z\) and remain uncensored through
follow-up. Define
\begin{equation}
  \Delta Y_{im,k+1}^{z}
  =
  Y_{im,k+1}^{z}-Y_{imk}^{z},
  \qquad
  R_{imk}^{z}
  =
  I(Y_{imk}^{z}=0).
  \label{eq:counterfactual-event-risk}
\end{equation}
For trial \(m\), the counterfactual cumulative risk by \(\tau\) is
\begin{equation}
  F_m^{z}(\tau)
  =
  \Pr\!\left(
    Y_{im,\tau}^{z}=1
    \,\middle|\,
    E_{im}=1
  \right).
  \label{eq:trial-specific-risk}
\end{equation}

\subsubsection{Pooled per-protocol estimand}
\label{subsubsec:pooling-trials}

The eligible population can differ across sequential trials because
eligibility is reassessed at the start of each trial. Sequential-trial
analyses can combine information across these trials
\citep{hernan2008observational,danaei2013observational,keogh2023sequentialtrials}.
In this paper, we define the pooled target population as the empirical
population of eligible person-trials across all trials.
This equal-person-trial target is the estimand choice adopted in this
paper, not a requirement of the MDR framework. The same framework can
also be applied to other prespecified target populations.
Giving each eligible person-trial equal baseline weight implies
\[
  N_m=\sum_{i=1}^{n}E_{im},
  \qquad
  \omega_m
  =
  \frac{N_m}
       {\sum_{m'\in\mathcal{M}}N_{m'}}.
\]
Here, \(N_m\) is the number of eligible individuals in trial \(m\), and
\(\omega_m\) is the proportion of all eligible person-trials contributed
by trial \(m\).

The pooled strategy-specific risk is
\begin{equation}
  F_{\omega}^{z}(\tau)
  =
  \sum_{m\in\mathcal{M}}
  \omega_m F_m^{z}(\tau).
  \label{eq:pooled-risk}
\end{equation}
The corresponding pooled risk difference is
\begin{equation}
  \psi_{\omega,\mathrm{RD}}(\tau)
  =
  F_{\omega}^{1}(\tau)-F_{\omega}^{0}(\tau)
  =
  \sum_{m\in\mathcal{M}}
  \omega_m
  \left\{
    F_m^{1}(\tau)-F_m^{0}(\tau)
  \right\}.
  \label{eq:pooled-risk-difference}
\end{equation}

The pooled counterfactual hazard during interval \(k\) is
\begin{equation}
  \lambda_{k}^{z,\omega}
  =
  \frac{
    \displaystyle
    \sum_{m\in\mathcal M}
    \omega_m
    \Pr\!\left(
      \Delta Y_{im,k+1}^{z}=1,\ R_{imk}^{z}=1
      \,\middle|\,
      E_{im}=1
    \right)
  }{
    \displaystyle
    \sum_{m\in\mathcal M}
    \omega_m
    \Pr\!\left(
      R_{imk}^{z}=1
      \,\middle|\,
      E_{im}=1
    \right)
  }.
  \label{eq:pooled-target-hazard}
\end{equation}
Although \(\omega_m\) defines the trial mixture at baseline, the
mixture among survivors may change over follow-up. Because the event is
absorbing, the pooled risk also satisfies
\begin{equation}
  F_{\omega}^{z}(\tau)
  =
  1-
  \prod_{k=0}^{\tau-1}
  \left(
    1-\lambda_{k}^{z,\omega}
  \right).
  \label{eq:pooled-hazard-risk}
\end{equation}

\subsection{Identification}
\label{subsec:identification}

\subsubsection{Identification assumptions}
\label{subsubsec:identification-assumptions}

Identification of the causal quantities in
Section~\ref{subsec:causal-estimands} relies on the following
conditions \citep{robins1986gformula,daniel2013timedependent}.
Formal statements and technical details are given in
Appendix~\ref{app:formal-identification-assumptions}.

\begin{enumerate}
  \item \textbf{No interference.}
  One individual's treatment does not affect another individual's
  states or outcomes.

  \item \textbf{Consistency and well-defined strategies.}
  When an individual follows strategy \(z\), the observed trajectory
  equals the trajectory under \(z\), and the strategy specifies an
  unambiguous treatment at each decision.

  \item \textbf{Sequential treatment exchangeability.}
  Conditional on the measured history, treatment decisions are not
  confounded by unmeasured predictors of future outcomes.

  \item \textbf{Sequential observation exchangeability.}
  Conditional on the measured history and treatment, remaining observed
  is not associated with future counterfactual states or outcomes.

  \item \textbf{History-level positivity.}
  Every treatment and observation pattern required by strategy \(z\)
  occurs with positive probability for the relevant measured histories.

  \item \textbf{State sufficiency and conditional-mean transportability.}
  The selected state adequately summarizes the measured history needed to predict the next event and the next state when used for target simulation. Among individuals with the same state, the next-event risk is assumed to be the same in the target and observed-adherent risk sets.

  \item \textbf{State-level overlap.}
  Every state occurring in the target risk set is represented in the
  corresponding observed-adherent risk set.

  \item \textbf{Correct temporal and risk-set alignment.}
  The target and observed data use the same within-interval ordering and
  define each risk set among individuals who are event-free at the
  beginning of that interval.
\end{enumerate}

\subsubsection{Target and observed-adherent risk-set distributions}
\label{subsubsec:target-observed-laws}

For each person-trial, define the augmented state
\begin{equation}
  \widetilde S_{imk}
  =
  (m,S_{imk}),
  \qquad
  \widetilde S_{imk}^{z}
  =
  (m,S_{imk}^{z}),
  \label{eq:augmented-pooled-state}
\end{equation}
with state space
\(\widetilde{\mathcal S}=\mathcal M\times\mathcal S\). Including
\(m\) allows the state distribution and event risk to vary across
trials within the pooled analysis.

For any measurable \(B\subseteq\widetilde{\mathcal S}\), define the
pooled target risk-set distribution under strategy \(z\) at interval
\(k\) by
\begin{equation}
  P_{k,z}^{T,\omega}(B)
  =
  \frac{
    \displaystyle
    \sum_{m\in\mathcal M}
    \omega_m
    \Pr\!\left(
      \widetilde S_{imk}^{z}\in B,\ R_{imk}^{z}=1
      \,\middle|\,
      E_{im}=1
    \right)
  }{
    \displaystyle
    \sum_{m\in\mathcal M}
    \omega_m
    \Pr\!\left(
      R_{imk}^{z}=1
      \,\middle|\,
      E_{im}=1
    \right)
  }.
  \label{eq:target-state-law}
\end{equation}
It describes the augmented states among pooled eligible person-trials
that would remain event-free under strategy \(z\) if censoring were
eliminated.

The corresponding observed-adherent distribution is
\begin{equation}
  P_{k,z}^{A,\omega}(B)
  =
  \frac{
    \displaystyle
    \sum_{m\in\mathcal M}
    \omega_m
    \Pr\!\left(
      \widetilde S_{imk}\in B,\ J_{imk}^{z}=1
      \,\middle|\,
      E_{im}=1
    \right)
  }{
    \displaystyle
    \sum_{m\in\mathcal M}
    \omega_m
    \Pr\!\left(
      J_{imk}^{z}=1
      \,\middle|\,
      E_{im}=1
    \right)
  }.
  \label{eq:adherent-state-law}
\end{equation}
Here, \(A\) means observed-adherent.

For \(\widetilde s=(m,s)\), define the observed-adherent state-specific
event risk by
\begin{equation}
  \mu_{k,z}^{A}(\widetilde s)
  =
  E\!\left(
    \Delta Y_{im,k+1}
    \,\middle|\,
    S_{imk}=s,\,
    J_{imk}^{z}=1
  \right),
  \label{eq:observed-outcome-regression}
\end{equation}
and the target state-specific event risk under strategy \(z\) by
\begin{equation}
  \mu_{k,z}^{T}(\widetilde s)
  =
  E\!\left(
    \Delta Y_{im,k+1}^{z}
    \,\middle|\,
    S_{imk}^{z}=s,\,
    E_{im}=1,\,
    R_{imk}^{z}=1
  \right).
  \label{eq:target-event-regression}
\end{equation}
The identification assumptions imply
\[
  \mu_{k}^{z}(\widetilde s)
  :=
  \mu_{k,z}^{A}(\widetilde s)
  =
  \mu_{k,z}^{T}(\widetilde s),
\]
for states in the target risk set. Thus, \(\mu_{k}^{z}\) denotes their
common identified event-risk function. Averaging this function over the
pooled target state distribution identifies the pooled target hazard:
\begin{equation}
  \lambda_{k}^{z,\omega}
  =
  \int_{\widetilde{\mathcal S}}
  \mu_{k}^{z}(\widetilde s)\,
  dP_{k,z}^{T,\omega}(\widetilde s).
  \label{eq:identified-target-hazard}
\end{equation}
Here, g-computation provides the target state distribution, while the
observed-adherent data provide the conditional event risk. The MDR
introduced below links these components by transporting the
observed-adherent distribution to the target risk set.

\subsubsection{Longitudinal g-computation}
\label{subsubsec:g-formula}

Equation~\eqref{eq:identified-target-hazard} identifies the target
hazard once \(P_{k,z}^{T,\omega}\) is known. Longitudinal g-computation
constructs this distribution by propagating pooled eligible
person-trials forward under strategy \(z\), with censoring eliminated.
Under the identification assumptions, the required event and
state-transition distributions can be learned from the observed data
\citep{robins1986gformula,daniel2013timedependent,young2011parametricgformula}.

The recursion begins with the pooled state distribution at baseline:
\begin{equation}
  P_{0,z}^{T,\omega}(B)
  =
  \frac{
    \displaystyle
    \sum_{m\in\mathcal M}
    \omega_m
    \Pr\!\left(
      \widetilde S_{im0}\in B,\ R_{im0}=1
      \,\middle|\,
      E_{im}=1
    \right)
  }{
    \displaystyle
    \sum_{m\in\mathcal M}
    \omega_m
    \Pr\!\left(
      R_{im0}=1
      \,\middle|\,
      E_{im}=1
    \right)
  }.
  \label{eq:initial-target-law}
\end{equation}
Because the baseline state is measured before the first treatment
decision, this distribution is common to both strategies.

At interval \(k\), Equation~\eqref{eq:identified-target-hazard}
averages the state-specific event risk over \(P_{k,z}^{T,\omega}\) to
obtain \(\lambda_{k}^{z,\omega}\). For person-trials that survive that
interval and for \(\widetilde s=(m,s)\), let
\begin{equation}
  \mathcal{T}_{k,z}(B\mid\widetilde s)
  =
  \Pr\!\left(
    \widetilde S_{im,k+1}^{z}\in B
    \,\middle|\,
    \substack{
      S_{imk}^{z}=s,\ E_{im}=1,\ R_{imk}^{z}=1,\\
      \Delta Y_{im,k+1}^{z}=0
    }
  \right)
  \label{eq:survivor-transition-law}
\end{equation}
denote the survivor-state transition law: the conditional probability
that the next state belongs to a measurable set
\(B\subseteq\widetilde{\mathcal S}\). The pooled target distribution at the next
interval is then
\begin{equation}
  P_{k+1,z}^{T,\omega}(B)
  =
  \frac{
    \displaystyle
    \int_{\widetilde{\mathcal S}}
    \left\{1-\mu_{k}^{z}(\widetilde s)\right\}
    \mathcal{T}_{k,z}(B\mid\widetilde s)\,
    dP_{k,z}^{T,\omega}(\widetilde s)
  }{
    1-\lambda_{k}^{z,\omega}
  },
  \label{eq:target-law-recursion}
\end{equation}
when \(1-\lambda_{k}^{z,\omega}>0\). See derivation in
Appendix~\ref{app:target-risk-set-recursion}.

\subsection{Direct Marginal Density-Ratio}
\label{subsec:direct-mdr}

For each strategy \(z\) and follow-up interval \(k\), the pooled target
and observed-adherent risk sets generally have different augmented-state
distributions. Under the state-overlap condition stated in
Appendix~\ref{app:formal-identification-assumptions}, the
Radon--Nikodym theorem guarantees the existence of the direct marginal
density ratio \citep{kallenberg2021foundations}:
\begin{equation}
  r_{k,z}^{\omega}(\widetilde s)
  =
  \frac{dP_{k,z}^{T,\omega}}
       {dP_{k,z}^{A,\omega}}(\widetilde s).
  \label{eq:direct-mdr}
\end{equation}
When both distributions have densities with respect to a common
dominating measure, this definition reduces to
\[
  r_{k,z}^{\omega}(\widetilde s)
  =
  \frac{p_{k,z}^{T,\omega}(\widetilde s)}
       {p_{k,z}^{A,\omega}(\widetilde s)}.
\]
States that are more common in the target risk set than in the
observed-adherent risk set receive larger weights. 
It is a direct marginal density ratio because it maps the
observed-adherent distribution of the current state directly to the
target distribution without using the complete longitudinal history.

\subsection{MDR-Weighted Estimation Algorithm}
\label{subsec:mdr-weighted-algorithm}

The MDR estimator has three main components: g-computation generates the
target risk sets, a classifier estimates the density ratios, and a
weighted discrete-time model estimates the interval-event hazards. The
estimand is the pooled strategy-specific risks defined in Equations~\eqref{eq:pooled-risk}.

The estimation procedure is as follows.

\begin{enumerate}
  \item \textbf{Construct the observed sequential-trial data.}
  For each trial \(m\), identify eligible individuals and create one
  record for each follow-up interval
  \citep{keogh2023sequentialtrials,su2024trialemulation}. For strategy \(z\), use records with
  \(J_{imk}^{z}=1\).

  \item \textbf{Generate the target risk-set states.}
  Fit two models using the observed longitudinal data: an event model
  for the current interval and a transition model for the next state
  among survivors. In notation, they estimate
  \begin{equation}
    \begin{aligned}
      \mu_k(a,s,m)
      &=
      \Pr\!\left(
        \Delta Y_{im,k+1}=1
        \,\middle|\,
        \substack{
          A_{imk}=a,\ S_{imk}=s,\ m,\\
          R_{imk}O_{imk}=1
        }
      \right),\\
      \mathcal{T}_k(B\mid a,s,m)
      &=
      \Pr\!\left(
        S_{im,k+1}\in B
        \,\middle|\,
        \substack{
          A_{imk}=a,\ S_{imk}=s,\ m,\\
          R_{imk}O_{imk}=1,\ \Delta Y_{im,k+1}=0,\\
          C_{im,k+1}=0
        }
      \right).
    \end{aligned}
    \label{eq:estimated-gcomp-components}
  \end{equation}
  For each strategy \(z\), draw \(N_{\mathrm{sim}}\) eligible baseline entries from
  across all sequential trials and simulate them forward by setting
  treatment to \(z\), simulating the event, removing trajectories in
  which it occurs, and updating the state among those remaining
  event-free at each interval. This is the standard
  forward-simulation procedure used in parametric longitudinal
  g-computation
  \citep{robins1986gformula,daniel2013timedependent,young2011parametricgformula}.
  Denote the resulting simulated target risk-set sample by
  \(\{\widetilde S_{uk}^{z,T}:u=1,\ldots,N_{k,z}^{T}\}\), where \(N_{k,z}^{T}\) is the number remaining event-free at interval \(k\).

  \item \textbf{Fit a follow-up-balanced ratio classifier.}
  For each strategy \(z\) and follow-up interval \(k\), sample target
  rows so that their number equals the number of observed-adherent
  rows. Label target rows by \(G=1\) and observed-adherent rows by
  \(G=0\). The classifier estimates the conditional probability that
  an augmented state comes from the target risk
  set. Pool
  the balanced samples across follow-up intervals and trials, and
  fit one probabilistic classifier for each strategy:
  \begin{equation}
    h_z(k,\widetilde s)
    =
    \Pr\!\left(
      G=1
      \,\middle|\,
      k,\widetilde S=\widetilde s,z
    \right).
    \label{eq:mdr-classifier-probability}
  \end{equation}

  \item \textbf{Convert classifier probabilities to normalized MDR
  weights.}
  Let \(\widehat h_z(k,\widetilde s)\) be the fitted probability and
  let \(\widehat\rho_z\) be the proportion of target rows in the
  classifier training sample. By
  Lemma~\ref{lem:classifier-density-ratio}, the raw density-ratio
  estimate is
  \begin{equation}
    \widehat r_{k,z}^{\mathrm{raw}}(\widetilde s)
    =
    \frac{1-\widehat\rho_z}{\widehat\rho_z}
    \frac{\widehat h_z(k,\widetilde s)}
         {1-\widehat h_z(k,\widetilde s)}.
    \label{eq:classifier-mdr}
  \end{equation}
  Classifier odds are a standard approach to density-ratio estimation
  \citep{menon2016densityratio,bickel2009covariateshift}. With exact
  follow-up balancing, the target and observed-adherent class
  proportions are equal within each interval and in the combined
  training sample, so \(\widehat\rho_z=1/2\).

  To enforce the mean-one property of a density ratio in each
  \((z,k)\), normalize the raw weights as
  \begin{equation}
    \widehat r_{imk}^{z}
    =
    \frac{
      \widehat r_{k,z}^{\mathrm{raw}}
      (\widetilde S_{imk})
    }{
      \displaystyle
      \frac{1}{n_{k,z}^{A}}
      \sum_{i',m':\,J_{i'm'k}^{z}=1}
      \widehat r_{k,z}^{\mathrm{raw}}
      (\widetilde S_{i'm'k})
    },
    \qquad
    n_{k,z}^{A}
    =
    \sum_{i,m}J_{imk}^{z}.
    \label{eq:cell-normalized-mdr}
  \end{equation}

  \item \textbf{Fit the MDR-weighted interval-event model.}
  Among observed-adherent rows, fit a pooled discrete-time outcome
  model \(\mu_{\beta_r}(z,m,k,s)\) using
  \(\widehat r_{imk}^{z}\) as the observation weight. For a binary
  outcome, let
  \(\ell_{\mathrm{Bern}}\) denote the Bernoulli log-likelihood
  contribution. The coefficient estimator maximizes
  \begin{equation}
    \widehat\beta_r
    =
    \arg\max_{\beta_r}
    \sum_{i,m,z,k}
    J_{imk}^{z}
    \widehat r_{imk}^{z}
    \ell_{\mathrm{Bern}}
    \left\{
      \Delta Y_{im,k+1},
      \mu_{\beta_r}(z,m,k,S_{imk})
    \right\}.
    \label{eq:mdr-weighted-outcome-fit}
  \end{equation}

  \item \textbf{Standardize the fitted hazards to the target states.}
  Evaluate the fitted outcome model on every target-state
  simulation and average within each cell \((z,k)\):
  \begin{equation}
    \widehat\lambda_{k}^{z,\omega}
    =
    \frac{1}{N_{k,z}^{T}}
    \sum_{u=1}^{N_{k,z}^{T}}
    \mu_{\widehat\beta_r}
    \left(
      z,m_u,k,S_{uk}^{z,T}
    \right).
    \label{eq:mdr-standardized-hazard}
  \end{equation}
  The corresponding pooled cumulative risk and risk difference are
  \begin{equation}
    \widehat F_{\omega}^{z}(\tau)
    =
    1-
    \prod_{k=0}^{\tau-1}
    \left(
      1-\widehat\lambda_{k}^{z,\omega}
    \right),
    \qquad
    \widehat\psi_{\omega,\mathrm{MDR}}(\tau)
    =
    \widehat F_{\omega}^{1}(\tau)
    -
    \widehat F_{\omega}^{0}(\tau).
    \label{eq:mdr-weighted-risk-difference}
  \end{equation}

\end{enumerate}

\subsection{Doubly Robust STTE--MDR Estimation}
\label{subsec:dr-stte-mdr}

The doubly robust estimator has two parts: an outcome-regression
estimate standardized to the target states and an MDR-weighted
correction for its prediction error
\citep{bang2005doublyrobust,dahabreh2020extending,kallus2020drl}. It uses the simulated target
states and MDR weights from
Section~\ref{subsec:mdr-weighted-algorithm}. For this estimator,
the pooled outcome model is fitted without MDR weights. Let
\(\widehat\beta\) denote its fitted coefficient vector, and let
\[
  \widehat\mu_{imk}^{z}
  =
  \mu_{\widehat\beta}(z,m,k,S_{imk})
\]
denote its predicted event probability.

For each \((z,k)\) cell, average the outcome-model predictions over the
simulated target states. This gives the outcome-regression component.
Next, calculate each observed residual as the observed outcome minus its
prediction. Use the MDR weights to transport the average residual from
the observed-adherent risk set to the target risk set. Adding this
correction gives
\begin{equation}
  \begin{aligned}
    \widehat\lambda_{k}^{z,\omega,\mathrm{plug}}
    &=
    \frac{1}{N_{k,z}^{T}}
    \sum_{u=1}^{N_{k,z}^{T}}
    \mu_{\widehat\beta}
    \left(z,m_u,k,S_{uk}^{z,T}\right),\\
    \widehat c_{k}^{z,\omega}
    &=
    \frac{
      \displaystyle
      \sum_{i,m}
      J_{imk}^{z}\widehat r_{imk}^{z}
      \left(
        \Delta Y_{im,k+1}-\widehat\mu_{imk}^{z}
      \right)
    }{
      \displaystyle
      \sum_{i,m}
      J_{imk}^{z}\widehat r_{imk}^{z}
    },\\
    \widehat\lambda_{k}^{z,\omega,\mathrm{DR}}
    &=
    \Pi_{[0,1]}
    \left(
      \widehat\lambda_{k}^{z,\omega,\mathrm{plug}}
      +
      \widehat c_{k}^{z,\omega}
    \right),
  \end{aligned}
  \label{eq:dr-mdr-hazard}
\end{equation}
where \(\Pi_{[0,1]}(x)=\min\{1,\max(0,x)\}\). The projection keeps the corrected
hazard within its allowable range and serves as a finite-sample
safeguard. The cumulative risks and risk difference are
\begin{equation}
  \widehat F_{\omega,\mathrm{DR}}^{z}(\tau)
  =
  1-
  \prod_{k=0}^{\tau-1}
  \left(
    1-\widehat\lambda_{k}^{z,\omega,\mathrm{DR}}
  \right),
  \qquad
  \widehat\psi_{\omega,\mathrm{DR}}(\tau)
  =
  \widehat F_{\omega,\mathrm{DR}}^{1}(\tau)
  -
  \widehat F_{\omega,\mathrm{DR}}^{0}(\tau).
  \label{eq:dr-mdr-risk-difference}
\end{equation}

The formal statement and the qualification concerning
target-state generation appear in Theorem~\ref{thm:dr-mdr}.

\subsection{Theoretical Properties}
\label{subsec:theoretical-properties}

This section summarizes the main theoretical guarantees; proofs and
technical qualifications are in
Appendix~\ref{app:theoretical-proofs}. Fix a supported pooled
strategy--follow-up cell \(c=(z,k)\). To suppress indices in the
following cellwise results, write
\(P_c^T=P_{k,z}^{T,\omega}\),
\(P_c^A=P_{k,z}^{A,\omega}\), and
\(r_c=dP_c^T/dP_c^A\). Let \(S_c\) denote a generic augmented state
with these distributions, and let \(Y_c\) denote its corresponding
interval-event indicator under the observed-adherent joint distribution.
Define
\(\mu_c(s)=E(Y_c\mid S_c=s)\) as the observed-adherent conditional event
risk. Expectations involving \(Y_c\) use the joint observed-adherent
law, whose state marginal is \(P_c^A\). The target hazard is
\(\lambda_c=\lambda_k^{z,\omega}
=E_{P_c^T}\{\mu_c(S_c)\}\).

The results proceed from the distributional foundation of MDR to its
causal and estimation implications. We first establish the
change-of-measure identity, then show identification and classifier
recovery, compare the direct MDR with full-history weighting, and
finally state the conditional double-robustness result.

\subsubsection{Change-of-measure representation}
\label{subsubsec:mdr-change-measure}

Weighting by the density ratio transforms expectations under the
observed-adherent distribution into the corresponding expectations
under the target distribution.
Proposition~\ref{prop:mdr-change-measure} states this change-of-measure
identity formally and gives its normalization and set-balance
consequences.

\begin{proposition}[Change of measure and balance]
\label{prop:mdr-change-measure}
If \(P_c^T\ll P_c^A\), then for every \(P_c^T\)-integrable function
\(\phi\),
\begin{equation}
  E_{P_c^T}\{\phi(S_c)\}
  =
  E_{P_c^A}\{r_c(S_c)\phi(S_c)\}.
  \label{eq:theory-change-measure}
\end{equation}
Consequently, \(r_c\) is normalized under the observed-adherent
distribution:
\[
  E_{P_c^A}\{r_c(S_c)\}=1.
\]
Moreover, for every measurable
\(B\subseteq\widetilde{\mathcal S}\),
\[
  E_{P_c^A}
  \{r_c(S_c)I(S_c\in B)\}
  =
  P_c^T(B).
\]
\end{proposition}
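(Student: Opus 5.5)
The plan is to derive all three claims from the defining property of the Radon--Nikodym derivative and then extend from indicators to general integrands by the standard measure-theoretic bootstrap. Since \(P_c^T\ll P_c^A\) and both are probability measures (hence \(\sigma\)-finite) on \(\widetilde{\mathcal S}\), the Radon--Nikodym theorem gives a nonnegative measurable \(r_c\), unique \(P_c^A\)-a.e., with
\[
  P_c^T(B)=\int_B r_c\,dP_c^A
  \qquad\text{for every measurable } B\subseteq\widetilde{\mathcal S}.
\]
This is exactly the set-balance statement \(E_{P_c^A}\{r_c(S_c)I(S_c\in B)\}=P_c^T(B)\). I will therefore prove it first, since it holds by definition of \(r_c=dP_c^T/dP_c^A\) in Equation~\eqref{eq:direct-mdr}.

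Next I will prove the identity \eqref{eq:theory-change-measure} for general \(\phi\). By linearity, the set-balance identity gives \(E_{P_c^T}\{\phi\}=E_{P_c^A}\{r_c\phi\}\) for nonnegative simple functions. For a nonnegative measurable \(\phi\), I take simple functions \(0\le\phi_n\uparrow\phi\). Since \(r_c\ge 0\), we also have \(r_c\phi_n\uparrow r_c\phi\). Applying the monotone convergence theorem on both sides then gives the identity in \([0,\infty]\).

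For a general \(P_c^T\)-integrable \(\phi\), I write \(\phi=\phi^+-\phi^-\). The nonnegative case shows that \(E_{P_c^A}\{r_c\phi^\pm\}=E_{P_c^T}\{\phi^\pm\}<\infty\), so \(r_c\phi\) is \(P_c^A\)-integrable, and subtracting the two identities gives the claim. This step is where I expect the main, though modest, care to be needed. The integrability of \(r_c\phi\) under \(P_c^A\) is not assumed; it has to be deduced from the nonnegative case. One also has to note that \(r_c\) is defined only up to \(P_c^A\)-null sets, which does not affect either side.

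Finally, normalization follows by taking \(\phi\equiv 1\), which is trivially \(P_c^T\)-integrable. Equivalently, it follows by taking \(B=\widetilde{\mathcal S}\) in set balance, giving \(E_{P_c^A}\{r_c(S_c)\}=P_c^T(\widetilde{\mathcal S})=1\).
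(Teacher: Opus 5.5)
Your proposal is correct and takes essentially the same approach as the paper, which rests on the Radon--Nikodym integration identity and obtains normalization and set balance by taking \(\phi\equiv 1\) and \(\phi=I(\cdot\in B)\). The only difference is that the paper cites the integration rule for general \(\phi\) directly, whereas you prove it from the set-level defining property via simple functions, monotone convergence and positive/negative parts, which also makes explicit that \(r_c\phi\) is \(P_c^A\)-integrable.
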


\subsubsection{Identification by marginal density-ratio weighting}
\label{subsubsec:mdr-identification}

Under the assumptions in
Section~\ref{subsubsec:identification-assumptions}, applying
Proposition~\ref{prop:mdr-change-measure} with
\(\phi(s)=\mu_c(s)\) and using
iterated expectation yields the
MDR representation of the target hazard in
Theorem~\ref{thm:mdr-identification}.

\begin{theorem}[Identification by direct MDR]
\label{thm:mdr-identification}
Under the identification conditions in
Section~\ref{subsubsec:identification-assumptions}, for every supported
strategy--follow-up cell,
\begin{equation}
  \lambda_c
  =
  E_{P_c^A}
  \left\{
    r_c(S_c)Y_c
  \right\}.
  \label{eq:theory-mdr-identification}
\end{equation}
The identified hazards determine the pooled cumulative risks and risk
difference through Equations~\eqref{eq:pooled-hazard-risk}
and~\eqref{eq:pooled-risk-difference}.
\end{theorem}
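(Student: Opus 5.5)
The plan is to derive the identity in three links: the causal identification of the target hazard already established in Section~\ref{subsubsec:target-observed-laws}, the change of measure of Proposition~\ref{prop:mdr-change-measure}, and iterated expectation under the observed-adherent joint law.

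\textbf{Step 1 (hazard as a target average).} Start from Equation~\eqref{eq:identified-target-hazard}, which gives \(\lambda_c=E_{P_c^T}\{\mu_k^z(S_c)\}\). This rests on the assumptions of Section~\ref{subsubsec:identification-assumptions}: consistency, sequential treatment and observation exchangeability, positivity, and state sufficiency with conditional-mean transportability. Together they yield \(\mu_{k,z}^{A}=\mu_{k,z}^{T}\) on the support of \(P_c^T\). I would briefly justify \eqref{eq:identified-target-hazard} from the definitions \eqref{eq:pooled-target-hazard} and \eqref{eq:target-state-law}. Write the numerator of \eqref{eq:pooled-target-hazard} as \(\sum_m\omega_m E\{\Delta Y^z_{im,k+1}R^z_{imk}\mid E_{im}=1\}\). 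Condition on \(\widetilde S^z_{imk}\) within \(\{R^z_{imk}=1,E_{im}=1\}\) to obtain the integrand \(\mu^T_{k,z}\), and divide by the denominator to obtain an integral against \(P_c^T\). Transportability then replaces \(\mu^T_{k,z}\) by \(\mu_c=\mu^A_{k,z}\). Alignment (assumption 8) ensures that both risk sets are defined at the start of interval \(k\) with the same ordering, so the conditioning events match.

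\textbf{Step 2 (change of measure).} State-level overlap (assumption 7) gives \(P_c^T\ll P_c^A\). Because \(\mu_c\in[0,1]\), it is bounded and hence \(P_c^T\)-integrable. Proposition~\ref{prop:mdr-change-measure} with \(\phi=\mu_c\) then gives \(\lambda_c=E_{P_c^A}\{r_c(S_c)\mu_c(S_c)\}\). One subtlety needs care here. \(\mu_c\) is only determined \(P_c^A\)-a.e. as a conditional expectation, and transportability only equates the two regressions on the target support. Absolute continuity means a \(P_c^A\)-null set is \(P_c^T\)-null, so any version of \(\mu_c\) gives the same integral.

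\textbf{Step 3 (iterated expectation).} Under the observed-adherent joint law, \(\mu_c(S_c)=E(Y_c\mid S_c)\), and \(r_c(S_c)\ge 0\) is \(S_c\)-measurable. By the tower property, \(E\{r_c(S_c)Y_c\}=E\{r_c(S_c)E(Y_c\mid S_c)\}=E_{P_c^A}\{r_c(S_c)\mu_c(S_c)\}\). This step is justified because \(0\le r_c Y_c\le r_c\) and \(E_{P_c^A}r_c=1\), so \(r_cY_c\) is integrable. Combining Steps 1–3 yields \eqref{eq:theory-mdr-identification}.

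The final claim follows by substituting the identified \(\lambda_k^{z,\omega}\) into \eqref{eq:pooled-hazard-risk}, which holds because the event is absorbing, and then taking differences as in \eqref{eq:pooled-risk-difference}. For supported cells, \(1-\lambda_k^{z,\omega}>0\) keeps the recursion well defined. I expect the main obstacle to be Step 1. It requires the pooled, trial-mixed conditioning to be handled correctly: \(m\) is part of the augmented state, so the \(\omega_m\)-weighted mixtures in \eqref{eq:target-state-law} and \eqref{eq:adherent-state-law} both factor through conditioning on \(\widetilde S\). It also requires showing that the exchangeability and sufficiency assumptions really deliver equality of the state-level regressions, not merely of the history-level ones. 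If needed, I would defer the latter to the formal assumption statements in Appendix~\ref{app:formal-identification-assumptions}, treating conditional-mean transportability as the assumption that supplies it directly.
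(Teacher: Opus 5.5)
Your proposal is correct and follows essentially the same argument as the paper: write \(\lambda_c\) as the \(P_c^T\)-average of the state-specific event risk, pass to \(P_c^A\) through the Radon--Nikodym derivative guaranteed by A7, and combine the tower property with the conditional-mean transportability in A6. The only difference is ordering. The paper invokes A6 last, through the identity \(E_{P_c^A}\{r_c(S_c)Y_c\}-\lambda_c=\int\{\mu_c^A(s)-\mu_c^T(s)\}\,dP_c^T(s)\), which makes the state-insufficiency bias explicit; you substitute \(\mu_c^T=\mu_c^A\) up front and also supply the integrability and version-of-conditional-expectation details that the paper leaves implicit.
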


Theorem~\ref{thm:mdr-identification} assumes the true density ratio.
The following lemma justifies the classifier-odds representation used in
Equation~\eqref{eq:classifier-mdr}.

\begin{lemma}[Classifier recovery of the density ratio]
\label{lem:classifier-density-ratio}
Construct a classifier sample within cell \(c\) by drawing
\(S_c\sim P_c^T\) when \(G=1\) and \(S_c\sim P_c^A\) when \(G=0\).
Let \(\rho_c=\Pr_{\mathrm{tr}}(G=1\mid c)\) and
\(h_c(s)=\Pr_{\mathrm{tr}}(G=1\mid S_c=s,c)\). If
\(0<\rho_c<1\) and \(P_c^T\ll P_c^A\), then
\begin{equation}
  r_c(s)
  =
  \frac{1-\rho_c}{\rho_c}
  \frac{h_c(s)}{1-h_c(s)}
  \qquad P_c^A\text{-almost surely}.
  \label{eq:theory-classifier-ratio}
\end{equation}
For balanced target and observed-adherent classes, \(\rho_c=1/2\), and
the identity reduces to
\[
  r_c(s)=\frac{h_c(s)}{1-h_c(s)}.
\]
\end{lemma}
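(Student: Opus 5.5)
The plan is to compute the joint law of \((S_c,G)\) in the training sample, apply Bayes' rule, and solve for the Radon--Nikodym derivative. In the classifier sample, the law of \(S_c\) is the mixture
\[
  Q_c=\rho_cP_c^T+(1-\rho_c)P_c^A .
\]
Because \(P_c^T\ll P_c^A\), we have \(Q_c\ll P_c^A\). Also \(P_c^A\ll Q_c\), since \(1-\rho_c>0\). So \(Q_c\) and \(P_c^A\) are mutually absolutely continuous, and \(P_c^T\ll Q_c\).

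Let \(q^T=dP_c^T/dQ_c\) and \(q^A=dP_c^A/dQ_c\). The conditional probability \(h_c(s)=\Pr_{\mathrm{tr}}(G=1\mid S_c=s)\) is characterized, \(Q_c\)-a.s., by the requirement that for every measurable \(B\),
\[
  \int_B h_c\,dQ_c=\Pr_{\mathrm{tr}}(G=1,S_c\in B)=\rho_cP_c^T(B)=\int_B\rho_c q^T\,dQ_c .
\]
Hence \(h_c=\rho_c q^T\) \(Q_c\)-a.s. By the same argument, \(1-h_c=(1-\rho_c)q^A\). The mixture identity also gives \(\rho_c q^T+(1-\rho_c)q^A=1\) \(Q_c\)-a.s., which confirms these two expressions are consistent.

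Next, apply the chain rule for Radon--Nikodym derivatives, \(r_c=dP_c^T/dP_c^A=q^T/q^A\) on the set where \(q^A>0\). That set has full \(P_c^A\)-measure, because \(P_c^A\{q^A=0\}=\int_{\{q^A=0\}}q^A\,dQ_c=0\). On it, \(1-h_c>0\), so the ratio is well defined. Substituting the expressions for \(h_c\) and \(1-h_c\) gives
\[
  r_c=\frac{h_c/\rho_c}{(1-h_c)/(1-\rho_c)}=\frac{1-\rho_c}{\rho_c}\frac{h_c}{1-h_c}\qquad P_c^A\text{-a.s.}
\]
This is exactly Equation~\eqref{eq:theory-classifier-ratio}. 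Setting \(\rho_c=1/2\) gives the balanced case \(r_c=h_c/(1-h_c)\).

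The only delicate point is measure-theoretic bookkeeping. The conditional probability \(h_c\) is defined only up to \(Q_c\)-null sets, and the ratio must be shown to be well defined \(P_c^A\)-almost surely. Mutual absolute continuity of \(Q_c\) and \(P_c^A\) resolves both issues, since \(Q_c\)-null sets coincide with \(P_c^A\)-null sets. This also means that \(h_c=1\) can occur only on a \(P_c^A\)-null set. When both laws have densities with respect to a common measure, the argument reduces to the elementary Bayes formula \(h_c=\rho_cp^T/\{\rho_cp^T+(1-\rho_c)p^A\}\).
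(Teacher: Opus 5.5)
Your proof is correct and follows the paper's route: the paper applies Bayes' rule with densities \(p_c^T,p_c^A\) relative to a common reference measure to get \(h_c=\rho_c p_c^T/\{\rho_c p_c^T+(1-\rho_c)p_c^A\}\), then solves for \(p_c^T/p_c^A\). Taking the mixture \(Q_c\) as that reference measure, and checking that \(\{q^A=0\}\) (equivalently \(\{h_c=1\}\)) is \(P_c^A\)-null, spells out the null-set bookkeeping that the paper leaves implicit.
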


\subsubsection{Direct MDR versus cumulative full-history weighting}
\label{subsubsec:mdr-cumulative-weights}

Conventional longitudinal weights accumulate interval-specific factors
for treatment adherence and observation through interval \(k\)
\citep{robins2000msm,keogh2023sequentialtrials}.

For the fixed cell \(c\), let
\(H_c=\widetilde H_k=(m,H_{imk})\) denote a generic augmented history,
and let \(Q_c^T\) and \(Q_c^A\) denote its aligned target and
observed-adherent distributions, respectively. When
\(Q_c^T\ll Q_c^A\), define
\begin{equation}
  W_c(h)
  =
  \frac{dQ_c^T}
       {dQ_c^A}(h).
  \label{eq:full-history-ratio}
\end{equation}
Let \(w_c^{\mathrm{IPACW}}(h)\) denote the correctly specified raw
cumulative weight based on inverse probabilities of adherence and
censoring through interval \(k\), corresponding to 
Equation~\eqref{eq:unstabilized_weight}. Define

\begin{equation}
  \widetilde w_c^{\mathrm{IPACW}}(h)
  :=
  \frac{
    w_c^{\mathrm{IPACW}}(h)
  }{
    E_{Q_c^A}\{w_c^{\mathrm{IPACW}}(H_c)\}
  }.
  \label{eq:normalized-ipacw-history-ratio}
\end{equation}

Theorem~\ref{thm:mdr-rao-blackwell} shows that the true population MDR is the
conditional expectation of the full-history ratio given the augmented state
\(S_c\). This is a Rao--Blackwellization
result
\citep{rao1945information,blackwell1947conditional,liu2020conditional}
and is related to marginalized importance sampling
\citep{liu2018breakinghorizon,xie2019mis}.

\begin{theorem}[MDR as a marginalized trajectory weight]
\label{thm:mdr-rao-blackwell}
Let \(Q_c^T\) and \(Q_c^A\) be aligned full-history laws whose
pushforwards under \(S_c=s(H_c)\) are \(P_c^T\) and \(P_c^A\).
If \(Q_c^T\ll Q_c^A\) and \(W_c=dQ_c^T/dQ_c^A\), then
\begin{equation}
  r_c(S_c)
  =
  E_{Q_c^A}(W_c\mid S_c)
  \qquad Q_c^A\text{-almost surely}.
  \label{eq:theory-mdr-marginalization}
\end{equation}
If \(W_c\in L_2(Q_c^A)\), then
\begin{equation}
  \operatorname{Var}_{P_c^A}\{r_c(S_c)\}
  \leq
  \operatorname{Var}_{Q_c^A}(W_c).
  \label{eq:theory-mdr-variance}
\end{equation}
The variance inequality compares the true population density ratios and
does not imply a variance ordering for estimators based on estimated
density ratios.
\end{theorem}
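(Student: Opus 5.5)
The plan is to verify directly that \(E_{Q_c^A}(W_c\mid S_c)\) has the defining property of the Radon--Nikodym derivative \(dP_c^T/dP_c^A\), and then apply the conditional variance decomposition.

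\textbf{Identification of \(r_c\).} Write \(S_c=s(H_c)\). Define \(\bar r(S_c):=E_{Q_c^A}(W_c\mid S_c)\). Because \(W_c\ge 0\) and \(E_{Q_c^A}(W_c)=1\), this conditional expectation exists and is \(\sigma(S_c)\)-measurable. By the Doob--Dynkin lemma it can be written as a measurable function \(\bar r\) of \(S_c\).

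Fix a measurable \(B\subseteq\widetilde{\mathcal S}\). Using the pushforward relation, the definition of \(W_c\), and the tower property,
\[
  P_c^T(B)=Q_c^T\{s(H_c)\in B\}
  =E_{Q_c^A}\{W_c\,I(S_c\in B)\}
  =E_{Q_c^A}\{\bar r(S_c)\,I(S_c\in B)\}
  =E_{P_c^A}\{\bar r(S_c)\,I(S_c\in B)\}.
\]
The last equality holds because \(P_c^A\) is the law of \(S_c\) under \(Q_c^A\).

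This display shows three things. First, \(P_c^T\ll P_c^A\). Second, \(\bar r\) is a density of \(P_c^T\) with respect to \(P_c^A\). Third, by the a.s.\ uniqueness of the Radon--Nikodym derivative, \(\bar r=r_c\) holds \(P_c^A\)-a.s. Equivalently, \(r_c(S_c)=E_{Q_c^A}(W_c\mid S_c)\) holds \(Q_c^A\)-a.s., which is Equation~\eqref{eq:theory-mdr-marginalization}.

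\textbf{Variance inequality.} Assume \(W_c\in L_2(Q_c^A)\). The conditional Jensen inequality gives \(r_c(S_c)\in L_2\). Apply the law of total variance:
\[
  \operatorname{Var}_{Q_c^A}(W_c)
  =\operatorname{Var}_{Q_c^A}\{E(W_c\mid S_c)\}
  +E_{Q_c^A}\{\operatorname{Var}(W_c\mid S_c)\}.
\]
The second term is nonnegative. The first term equals \(\operatorname{Var}_{P_c^A}\{r_c(S_c)\}\), since the law of \(S_c\) under \(Q_c^A\) is \(P_c^A\). This gives Equation~\eqref{eq:theory-mdr-variance}.

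Both variances have the same mean, because \(E_{P_c^A}\{r_c\}=E_{Q_c^A}\{W_c\}=1\) by Proposition~\ref{prop:mdr-change-measure}. The inequality is therefore equivalent to a comparison of second moments. The closing remark in the statement, about estimated density ratios, is interpretive and needs no proof.

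\textbf{Main obstacle.} The only delicate point is the measure-theoretic bookkeeping. One must be explicit that \(Q_c^T\) and \(Q_c^A\) are defined on a common measurable space of augmented histories, that \(s(\cdot)\) is measurable, and that ``aligned'' means both pushforward relations hold with the same map \(s\). Once that is settled, the identity follows from uniqueness of Radon--Nikodym derivatives.

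If \(W_c\) is only integrable, the first claim still holds. Only the variance statement requires square integrability.
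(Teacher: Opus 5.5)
Your proposal is correct and follows essentially the same route as the paper. Both arguments use the pushforward relations and the defining property of \(W_c\) to show \(E_{Q_c^A}\{W_c I(S_c\in B)\}=P_c^T(B)=E_{P_c^A}\{r_c(S_c)I(S_c\in B)\}\) for every measurable \(B\), conclude by almost-sure uniqueness, and then apply the law of total variance. The differences are cosmetic: you obtain \(P_c^T\ll P_c^A\) as a byproduct of exhibiting \(E_{Q_c^A}(W_c\mid S_c)\) as a density (via Doob--Dynkin), whereas the paper checks absolute continuity separately through null sets and phrases the last step with a regular conditional law.
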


For the aligned full-history change of measure, we show 
\(\widetilde w_c^{\mathrm{IPACW}}(h)=W_c(h)\). Substitution into
Equation~\eqref{eq:theory-mdr-marginalization} gives
\begin{equation}
  \begin{aligned}
  r_c(S_c)
  &=E_{Q_c^A}
    \{\widetilde w_c^{\mathrm{IPACW}}(H_c)\mid S_c\}\\
  &=\frac{
    E_{Q_c^A}\{w_c^{\mathrm{IPACW}}(H_c)\mid S_c\}
  }{
    E_{Q_c^A}\{w_c^{\mathrm{IPACW}}(H_c)\}
  },
  \qquad Q_c^A\text{-almost surely}.
  \end{aligned}
  \label{eq:mdr-ipacw-marginalization}
\end{equation}
Thus, the true population MDR is the conditional marginalization of the mean-normalized cumulative IPACW over histories sharing the same augmented state \(S_c\). See Appendix~\ref{app:rao-blackwell} for details.

\subsubsection{Conditional double robustness}
\label{subsubsec:dr-mdr-theory}

The estimator in Section~\ref{subsec:dr-stte-mdr} augments the MDR weighting component
with an outcome regression \citep{kallus2020drl}. The following theorem
states its conditional double-robustness property and makes explicit
the requirement that the target-state distribution be generated
consistently.

\begin{theorem}[Conditional double robustness of normalized DR--MDR]
\label{thm:dr-mdr}
Fix a cell \(c\), and let \(\widehat\Psi_c\) denote the unprojected
DR--MDR estimator in Equation~\eqref{eq:dr-mdr-hazard}, and
\(\widehat\lambda_c^{\mathrm{DR}}
=\Pi_{[0,1]}(\widehat\Psi_c)\). Suppose that
the target-state law is consistently estimated. Let
\(\overline\mu_c\) and \(\overline r_c\) denote the probability limits
of the fitted outcome regression and density ratio respectively, so
that
\[
  \widehat\mu_c\xrightarrow{p}\overline\mu_c,
  \qquad
  \widehat r_c\xrightarrow{p}\overline r_c,
\]
Assume \(0<E_{P_c^A}\{\overline r_c(S_c)\}<\infty\), and suppose that the
corresponding empirical averages converge to their population
expectations. If either
\[
  \overline\mu_c=\mu_c
  \qquad\text{or}\qquad
  \overline r_c=a_c r_c
  \quad\text{for some }a_c>0,
\]
then
\[
  \widehat\Psi_c
  \xrightarrow{p}
  \lambda_c,
  \qquad
  \widehat\lambda_c^{\mathrm{DR}}
  \xrightarrow{p}
  \lambda_c.
\]
\end{theorem}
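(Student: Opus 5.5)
The argument computes the probability limit of each of the two pieces of \(\widehat\Psi_c\) and then handles the two robustness cases separately.

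\textbf{Step 1: limits of the two pieces.} Write \(\widehat\Psi_c=\widehat\lambda_c^{\mathrm{plug}}+\widehat c_c\).

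For the plug-in term, we use that the simulated target states consistently represent \(P_c^T\), together with \(\widehat\mu_c\xrightarrow{p}\overline\mu_c\). A standard argument (uniform convergence, or dominated convergence since all predictions lie in \([0,1]\)) gives
\[
\widehat\lambda_c^{\mathrm{plug}}\xrightarrow{p}E_{P_c^T}\{\overline\mu_c(S_c)\}.
\]

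The correction term is a ratio of empirical averages over the observed-adherent rows. By the assumed convergence of empirical averages, and because the denominator limit is \(E_{P_c^A}\{\overline r_c(S_c)\}\in(0,\infty)\), the continuous mapping theorem gives
\[
\widehat c_c\xrightarrow{p}\frac{E_{P_c^A}[\overline r_c(S_c)\{Y_c-\overline\mu_c(S_c)\}]}{E_{P_c^A}\{\overline r_c(S_c)\}}.
\]
Here the cell normalization in Equation~\eqref{eq:cell-normalized-mdr} cancels between numerator and denominator, so only \(\overline r_c\) up to scale matters.

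The main technical obstacle is replacing fitted functions by their limits inside the empirical averages. I would bound the difference by \(\sup|\widehat r_c-\overline r_c|\) and \(\sup|\widehat\mu_c-\overline\mu_c|\), or by \(L_2\) norms via Cauchy--Schwarz. Alternatively, I would simply invoke the theorem's hypothesis that "the corresponding empirical averages converge", interpreted to cover the fitted plug-ins.

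\textbf{Step 2: outcome model correct, \(\overline\mu_c=\mu_c\).} By iterated expectation under the observed-adherent joint law,
\[
E_{P_c^A}[\overline r_c(S_c)\{Y_c-\mu_c(S_c)\}]=E_{P_c^A}[\overline r_c(S_c)\,E\{Y_c-\mu_c(S_c)\mid S_c\}]=0,
\]
so the correction limit vanishes for any \(\overline r_c\) with finite positive mean. The plug-in limit is \(E_{P_c^T}\{\mu_c(S_c)\}=\lambda_c\) by Equation~\eqref{eq:identified-target-hazard}.

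\textbf{Step 3: density ratio correct up to scale, \(\overline r_c=a_cr_c\).} The constant \(a_c\) cancels. Proposition~\ref{prop:mdr-change-measure} gives \(E_{P_c^A}r_c=1\), so the correction limit equals
\[
E_{P_c^A}\{r_c(S_c)Y_c\}-E_{P_c^A}\{r_c(S_c)\overline\mu_c(S_c)\}.
\]
By Theorem~\ref{thm:mdr-identification}, the first term is \(\lambda_c\). By Proposition~\ref{prop:mdr-change-measure} with \(\phi=\overline\mu_c\), which is bounded and hence integrable, the second term is \(E_{P_c^T}\{\overline\mu_c(S_c)\}\). Adding the plug-in limit, the \(\overline\mu_c\) terms cancel, leaving \(\lambda_c\).

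\textbf{Step 4: conclusion.} In either case \(\widehat\Psi_c\xrightarrow{p}\lambda_c\). Since \(\Pi_{[0,1]}\) is continuous and \(\lambda_c\in[0,1]\), the continuous mapping theorem gives \(\widehat\lambda_c^{\mathrm{DR}}\xrightarrow{p}\Pi_{[0,1]}(\lambda_c)=\lambda_c\).
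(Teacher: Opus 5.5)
Your proposal is correct and follows essentially the same route as the paper. Both arguments use the same probability limits from the convergence hypotheses and Slutsky's theorem, iterated expectation, the change-of-measure identity of Proposition~\ref{prop:mdr-change-measure}, and continuous mapping for the projection. The only difference is organizational: the paper collects both cases into the single product remainder \(\Psi_c(\overline\mu_c,\overline r_c)-\lambda_c=E_{P_c^A}[\{\overline r_c^{\,\dagger}(S_c)-r_c(S_c)\}\{\mu_c(S_c)-\overline\mu_c(S_c)\}]\), which it later reuses for the Cauchy--Schwarz rate bound, whereas you verify each case directly.
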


%% file: simulation_study.tex

\section{Simulation Study}
\label{sec:simulation-study}

\subsection{Simulation Design and Data-Generating Process}
\label{subsec:simulation-dgp}

We specified a data-generating process (DGP) for a longitudinal cohort
with 3-dim time-invariant covariates
\(X_i\), a continuous time-varying confounder \(L_{it}\), treatment
\(A_{it}\), the interval event indicator \(\Delta Y_{i,t+1}\), and the
interval loss-to-follow-up indicator \(\Delta C_{i,t+1}\). The full cohort observation contained
\(K=24\) monthly follow-up intervals, indexed by
\(t=0,\ldots,K-1\).  Within each month, variables were generated in the order
\[
  L_{it}
  \longrightarrow A_{it}
  \longrightarrow \Delta Y_{i,t+1}
  \longrightarrow \Delta C_{i,t+1}
  \longrightarrow L_{i,t+1}.
\]
The current covariate value \(L_{it}\) influenced both treatment
assignment \(A_{it}\) and the subsequent event risk. In turn,
\(A_{it}\) influenced the next covariate value \(L_{i,t+1}\). 

Table~\ref{tab:simulation-reference-dgp} summarizes the variables included
in each reference data-generating model. Complete baseline distributions,
model coefficients, conditional sampling details, and probability restrictions are provided in
Appendix~\ref{app:simulation-design}. 

\begin{table}[htbp]
  \centering
  \small
  \caption{Reference longitudinal model specifications.}
  \begin{tabular}{%
    >{\raggedright\arraybackslash}p{0.32\textwidth}
    >{\raggedright\arraybackslash}p{0.60\textwidth}}
    \hline
    Component & Reference specification \\
    \hline
    Baseline
    & Time-invariant covariates \(X_i=(X_{i1},X_{i2},X_{i3})\)
      and time-varying covariate \(L_{i0}\). \\[2pt]
    Treatment (logistic)
    & \(A_{it}\sim A_{i,t-1}+L_{it}+X_i+t\). \\[2pt]
    Event (logistic)
    & \(\Delta Y_{i,t+1}\sim A_{it}+L_{it}+X_i+t\). \\[2pt]
    Loss to follow-up (logistic)
    & \(\Delta C_{i,t+1}\sim A_{it}+L_{it}+X_i\). \\[2pt]
    Evolution of \(L\) (linear)
    & \(L_{i,t+1}\sim L_{it}+A_{it}+X_i\). \\
    \hline
  \end{tabular}
  \label{tab:simulation-reference-dgp}
\end{table}

From each cohort, we emulated 12-month (\(\tau=12\)) target trials beginning at months
\(m=0,\ldots,12\). Individuals were eligible at a given trial entry if
they were event-free, uncensored, and untreated immediately beforehand.
They could enter multiple trials as long as they continued to meet these
criteria.

We compared two sustained strategies: initiate treatment at entry and remain treated (\(g_1\)), or remain untreated throughout
follow-up (\(g_0\)). Observed follow-up ended at the first event,
protocol deviation, loss to follow-up, or the 12-month administrative
endpoint.

For each scenario, the true causal contrast, pooled 12-month risk difference, was approximated using a separate Monte Carlo population of \(50{,}000\) individuals \citep{keogh2023sequentialtrials}. 
Further details of this calculation are given in
Appendix~\ref{app:simulation-design}. The implementation and simulation analysis code is available at
\url{https://github.com/Zernjk/STTE-MDR}.

\subsection{Simulation Scenarios}
\label{subsec:simulation-scenarios}

We considered a base scenario and six stress tests, summarized in
Table~\ref{tab:simulation-scenarios}. Each stress test changed one
component of the reference data-generating process while holding the
others fixed. Exact coefficient changes are reported in
Appendix~\ref{app:scenario-diagnostics}.

\begin{table}[htbp]
  \centering
  \small
  \caption{Simulation scenarios and intended challenges.}
  \begin{tabular}{%
    >{\raggedright\arraybackslash}p{0.17\textwidth}
    >{\raggedright\arraybackslash}p{0.40\textwidth}
    >{\raggedright\arraybackslash}p{0.30\textwidth}}
    \hline
    Scenario & Change from the base scenario & Intended challenge \\
    \hline
    Base
    & \(n=2000\) and reference model parameters
    & Standard setting \\[2pt]
    Small sample
    & Reduce the cohort size to \(n=100\)
    & Finite-sample variability \\[2pt]
    Strong confounding
    & Increase the coefficient of \(L_{it}\) in the treatment and event models
    & Stronger time-varying confounding \\[2pt]
    Poor positivity
    & Increase the coefficient of \(L_{it}\) in the treatment model
    & Practical nonoverlap \\[2pt]
    Poor adherence
    & Decrease the coefficient of \(A_{i,t-1}\) in the treatment model
    & Fewer adherent observations \\[2pt]
    Rare event
    & Decrease the event-model intercept
    & Sparse events \\[2pt]
    Nonlinear
    & Add \(L_{it}^{2}\) to the treatment model, and add
      \(L_{it}^{2}\) and \(A_{it}L_{it}\) to the event model
    & Nonlinearity and model misspecification \\
    \hline
  \end{tabular}
  \label{tab:simulation-scenarios}
\end{table}

Figure~\ref{fig:main-scenario-diagnostics} confirms that the scenarios
produced the intended contrasts. Strong confounding approximately doubled
the two confounding-association contrasts. Poor positivity increased the
percentage of extreme treatment probabilities from \(10.9\%\) to
\(40.3\%\), poor adherence reduced treatment continuation from \(76.3\%\)
to \(55.5\%\), and the rare-event setting reduced the 12-month event risk
from \(12.5\%\) to \(1.1\%\). The small-sample setting mainly increased the
variation between replication. Diagnostic definitions,
numerical summaries, and supporting plots are provided in
Appendix~\ref{app:scenario-diagnostics}.

\begin{figure}[!htbp]
  \centering
  \includegraphics[width=\textwidth]{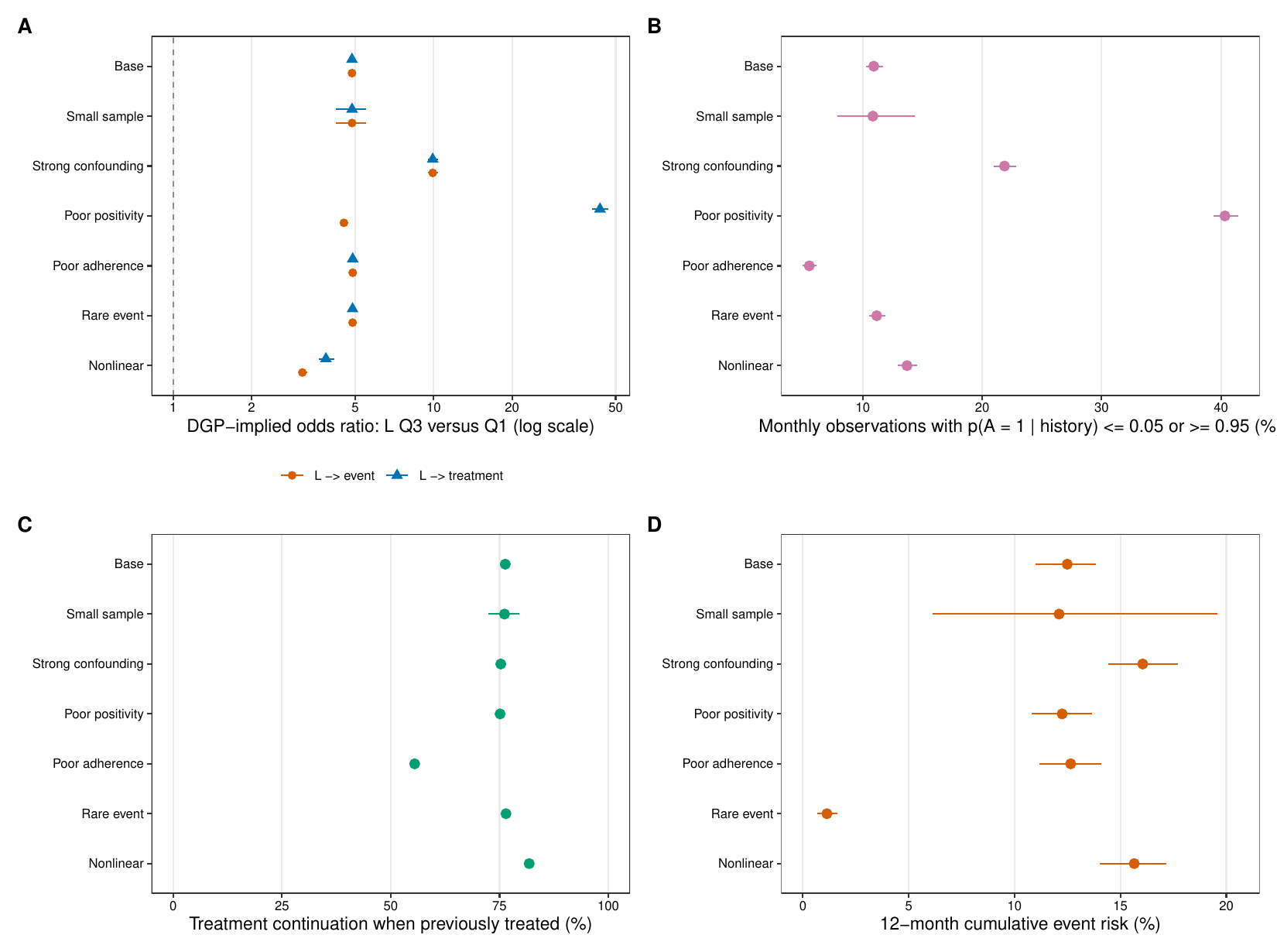}
  \caption{Empirical characteristics of the seven simulation scenarios
  across 500 replications per scenario. Points denote medians and
  horizontal intervals denote empirical 2.5th and 97.5th percentiles
  across replications. The panels summarize confounding strength,
  practical positivity, treatment continuation, and 12-month event risk,
  respectively. The intervals describe between-replication variation.
  DGP denotes the data-generating process. In Panel A, we compared the odds
  of treatment assignment and the odds of an event when \(L_{it}\) was set
  to its third rather than first quartile, while all other variables were
  unchanged. Larger odds ratios indicate stronger associations of \(L_{it}\)
  with treatment and event risk and therefore stronger confounding.}
  \label{fig:main-scenario-diagnostics}
\end{figure} 

\subsection{Estimators Compared}
\label{subsec:estimators-compared}

We compared ten estimators of the pooled 12-month per-protocol risk
difference in Equation~\eqref{eq:pooled-risk-difference}.
\begin{itemize}
  \item \textbf{Marginal density ratio (MDR).}\par
  MDR--\emph{Entry}, MDR--\emph{State}, \emph{DR}-MDR--\emph{Entry}, and \emph{DR}-MDR--\emph{State}.

  \item \textbf{Inverse probability of censoring weighting (IPCW).}\par
  IPCW--\emph{Entry}, IPCW--\emph{State}, \emph{DR}-IPCW--\emph{Entry}, and \emph{DR}-IPCW--\emph{State}.

  \item \textbf{TrialEmulation.}\par
  We used the \texttt{TrialEmulation} R package
  \citep{su2024trialemulation,gravestock2026trialemulation}, which combined a
  weighted pooled-logistic outcome model with stabilized weights for treatment
  switching and loss to follow-up. Compared with IPCW--\emph{Entry}, TrialEmulation
  omitted treatment reweighting at trial entry and shifted the
  natural-censoring weight forward by one interval.

  \item \textbf{Longitudinal TMLE--GLM.}\par
  This estimator used generalized linear models for the sequential outcome,
  treatment, and censoring components of longitudinal targeted maximum
  likelihood estimation. It is doubly robust because it combines sequential
  outcome regression with a targeting step based on the treatment and
  censoring mechanisms \citep{lendle2017ltmle}.
\end{itemize}

The suffix \emph{Entry} denotes an outcome model based on trial-entry
covariates, followed by standardization over the eligible trial-entry
population. \emph{State} denotes an outcome model using time-updated covariates, with interval hazards standardized over the corresponding simulated target risk set within each \((z,k)\). The prefix \emph{DR} indicates that
the corresponding plug-in estimator was augmented by a weighted
outcome-residual correction. We focus below on the IPCW and MDR
estimators; complete specifications for other implementation choices are provided in
Appendix~\ref{app:simulation-estimator-details}.

\subsubsection{Target-state simulation}

The IPCW--\emph{State} and \emph{DR}-IPCW--\emph{State} estimators, as well as all four MDR
estimators, used simulated strategy-specific target risk sets. For person
\(i\), trial \(m\), and follow-up month \(k\), the state in
Section~\ref{subsubsec:state-representation} was specified as
\begin{equation}
  S_{imk}
  =
  \left(
    X_i,\,
    L_{i0},\,
    L_{im,k-1},\,
    L_{imk}
  \right),
  \label{eq:simulation-mdr-state}
\end{equation}
where \(L_{im,k-1}\) and \(L_{imk}\) are the previous and current values
of the time-varying covariate.

The transition and event
models on the original cohort time scale \(t\) were
\begin{align}
  \text{Covariate transition:}\qquad
  L_{i,t+1}
  &\sim
  X_i+L_{i0}+L_{i,t-1}+L_{it}+A_{it}+t,
  \label{eq:simulation-mdr-transition-model}\\
  \text{Event depletion:}\qquad
  \Delta Y_{i,t+1}
  &\sim
  X_i+L_{i0}+L_{i,t-1}+L_{it}+A_{it}+t.
  \label{eq:simulation-mdr-target-event-model}
\end{align}
Fitting and forward-simulation details are given in
Appendix~\ref{app:simulation-estimator-details}.

\subsubsection{Weight construction}

\paragraph{MDR weights.}
Let \(G=1\) identify a simulated target state and \(G=0\) an
observed-adherent state. For each strategy, we fitted a 500-tree
probability forest pooled across follow-up months
\citep{breiman2001randomforests}. It estimated
\begin{equation}
  h_z
  =
  \Pr\!\left[
    G=1
    \,\middle|\,
    k,m,S_{imk},
    L_{imk}-L_{im,k-1},
    L_{imk}^2,
    L_{imk}^3
  \right],
  \label{eq:simulation-mdr-classifier}
\end{equation}
with \(k\) represented by indicators. We used a nonlinear classifier to
allow the target and observed-adherent state distributions to differ
through nonlinearities and interactions. 

\paragraph{IPCW weights.}
The IPCW estimators used stabilized inverse probabilities for protocol
deviation and natural loss to follow-up. With no grace period, the
stabilized artificial-censoring weight for strategy \(z\) was
\begin{align}
  \widehat{\mathrm{SW}}_{imk}^{A}(z)
  ={}&
  \prod_{j=0}^{k}
  \frac{
    \widehat{\Pr}\!\left(
      A_{imj}=z
      \mid j,m,X_i,A_{im,j-1}
    \right)
  }{
    \widehat{\Pr}\!\left(
      A_{imj}=z
      \mid j,m,X_i,L_{imj},A_{im,j-1}
    \right)
  }.
  \label{eq:simulation-stabilized-artificial-weight}
\end{align}
The corresponding natural-censoring weight was
\begin{align}
  \widehat{\mathrm{SW}}_{imk}^{C}(z)
  ={}&
  \prod_{j=0}^{k-1}
  \frac{
    \widehat{\Pr}\!\left(
      \Delta C_{im,j+1}=0
      \,\middle|\,
      \substack{
        J_{imj}^{z}=1,\ \Delta Y_{im,j+1}=0,\\
        z,j,m,X_i
      }
    \right)
  }{
    \widehat{\Pr}\!\left(
      \Delta C_{im,j+1}=0
      \,\middle|\,
      \substack{
        J_{imj}^{z}=1,\ \Delta Y_{im,j+1}=0,\\
        j,m,X_i,L_{imj},A_{imj}
      }
    \right)
  }.
  \label{eq:simulation-stabilized-natural-weight}
\end{align}
The combined weight was
\begin{equation}
  \widehat{\mathrm{SW}}_{imk}(z)
  =
  \widehat{\mathrm{SW}}_{imk}^{A}(z)
  \widehat{\mathrm{SW}}_{imk}^{C}(z).
  \label{eq:simulation-stabilized-ipcw}
\end{equation}
The numerator and denominator probabilities were estimated by logistic
regression. Risk-set restrictions, product alignment, probability bounds,
and truncation choices are reported in
Appendix~\ref{app:simulation-estimator-details}.

\subsubsection{Outcome models}

The \emph{Entry} and \emph{State} specifications used different covariate
information in the outcome regression. They also standardized risks over
different populations. Their logistic outcome models were
\begin{align}
  \emph{Entry}:\qquad
  \Delta Y_{im,k+1}
  &\sim
  z+\operatorname{factor}(k)+m+X_i+L_{im0},
  \label{eq:simulation-entry-outcome-model}\\
  \emph{State:}\qquad
  \Delta Y_{im,k+1}
  &\sim
  z+\operatorname{factor}(k)+m+X_i+L_{i0}+L_{imk}.
  \label{eq:simulation-state-outcome-model}
\end{align}
Here, \(\operatorname{factor}(k)\) denotes categorical indicators for
follow-up month. IPCW and MDR
plug-in estimators fitted the appropriate model using their respective
weights. The DR estimators fitted the same model without outcome weights
and applied the weights in the residual correction described in
Section~\ref{subsec:dr-stte-mdr}.

\subsection{Simulation Results}
\label{subsec:simulation-results}

Let \(\psi^{(s)}\) denote the Monte Carlo truth under scenario \(s\), and
let \(\widehat\psi_{r}^{(s)}\) denote its estimate in replication \(r\),
where \(r=1,\ldots,N_{\mathrm{rep}}\). With
\(N_{\mathrm{rep}}=500\) replications per scenario, signed bias and root
mean squared error (RMSE) were calculated as
\begin{align}
  \operatorname{Bias}^{(s)}
  &=
  \frac{1}{N_{\mathrm{rep}}}
  \sum_{r=1}^{N_{\mathrm{rep}}}
  \left(\widehat\psi_{r}^{(s)}-\psi^{(s)}\right),
  \label{eq:simulation-bias}\\
  \operatorname{RMSE}^{(s)}
  &=
  \left\{
    \frac{1}{N_{\mathrm{rep}}}
    \sum_{r=1}^{N_{\mathrm{rep}}}
    \left(\widehat\psi_{r}^{(s)}-\psi^{(s)}\right)^2
  \right\}^{1/2}.
  \label{eq:simulation-rmse}
\end{align}
Both metrics refer to the estimated pooled 12-month risk difference and
are reported in percentage points. For methods using explicit analysis
weights, stability was assessed using the replication-specific maximum
weight and coefficient of variation.

As shown in Figure~\ref{fig:simulation-performance}, an MDR or \emph{DR}-MDR
estimator attained the lowest RMSE in six of the seven scenarios;
\emph{DR}-IPCW--\emph{Entry} was best in the small-sample setting.
\emph{State}-based MDR estimators had the lowest RMSE in the base,
strong-confounding, poor-positivity, poor-adherence, and nonlinear
scenarios, whereas \emph{DR}-MDR--\emph{Entry} had the lowest RMSE with rare events.
\emph{DR}-MDR--\emph{State} also had the smallest absolute bias in five scenarios.
The clearest advantage occurred under poor positivity, where MDR--\emph{State}
had an RMSE of \(3.48\) percentage points, compared with \(7.96\),
\(9.54\), and \(7.85\) points for IPCW--\emph{Entry}, IPCW--\emph{State}, and
TrialEmulation, respectively. Nonlinearity remained the most challenging
setting, although \emph{DR}-MDR--\emph{State} achieved the lowest RMSE
(\(8.71\) points).

\begin{figure}[htbp]
  \centering
  \includegraphics[width=0.9\linewidth]{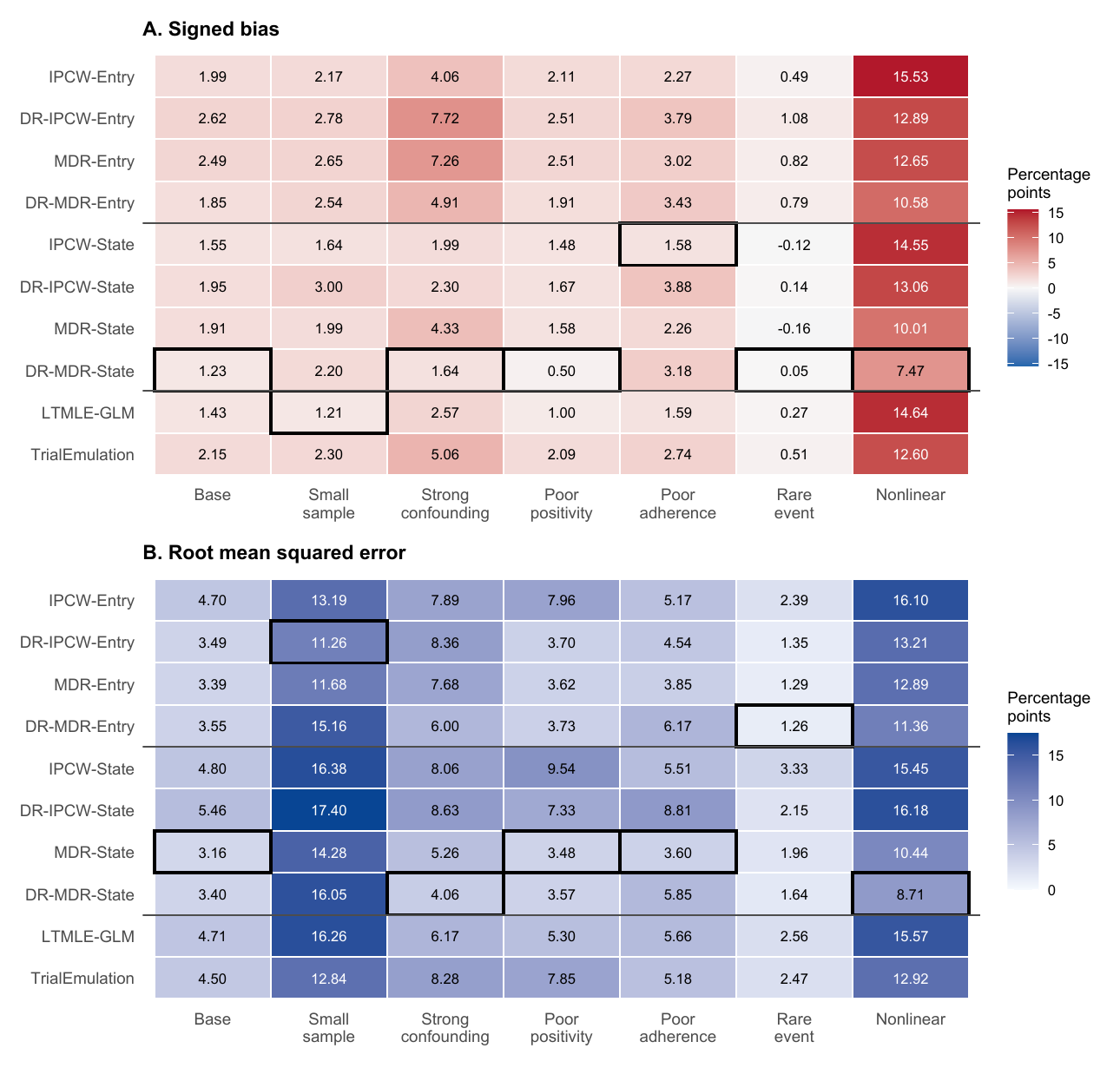}
  \caption{Monte Carlo performance of the ten estimators across seven
  scenarios, based on 500 replications per scenario. Entries are
  percentage points. Panel~A shows signed bias, with positive values
  indicating upward bias in the estimated risk difference. Panel~B
  shows root mean squared error. Outlined cells mark the smallest
  absolute bias or RMSE within each scenario. Horizontal rules separate
  Entry estimators, State estimators, and the two reference
  implementations.}
  \label{fig:simulation-performance}
\end{figure}

MDR generally performed well because it transported observed-adherent states directly to the target distribution within each \((z, k)\) cell. By avoiding cumulative probability products, MDR produced more stable weights, which likely contributed to its lower RMSE. \emph{State}-based standardization may reduce bias by allowing the covariate distribution to evolve over follow-up. The \emph{DR} correction further reduce bias from outcome-model misspecification, although the remaining bias of \emph{DR}-MDR--\emph{State} in the nonlinear scenario shows that augmentation did not fully overcome misspecification. In the small-sample setting, however, sparse adherent risk sets made the target-state, density-ratio, and outcome models less precise. This weakened the MDR advantage, although MDR–Entry remained close to the best-performing estimator.

This interpretation was consistent with the weight diagnostics: MDR
weights were substantially more stable than IPCW and TrialEmulation
weights, particularly under poor positivity. Complete results are provided
in Appendix~\ref{app:simulation-weight-diagnostics}.

%% file: discussion.tex

\section{Discussion}
\label{sec:discussion}

This study developed STTE--MDR, a marginal density-ratio approach for
estimating pooled per-protocol effects in sequential target trial
emulations \citep{hernan2016targettrial,keogh2023sequentialtrials,
su2024trialemulation}. The method treats estimation as a risk-set
transport problem.

The central innovation is a direct density ratio comparing the target
risk set with the observed-adherent risk set. Conventional longitudinal
weighting forms cumulative weights from a sequence of treatment-adherence
and censoring probabilities. A few small probabilities can make the
resulting weights unstable \citep{robins2000msm,cole2008constructing}.
MDR instead
jointly accounts for how nonadherence and censoring alter the current
state distribution. 
A probabilistic classifier provides a practical way to estimate this ratio \citep{menon2016densityratio,bickel2009covariateshift}. 
Under the stated identification conditions, MDR weighting recovers the target
interval hazards. At the population level, the MDR weights have no
greater variance than the corresponding full-history weights. However,
less variable weights do not necessarily produce less variable
treatment-effect estimates. We also extend STTE--MDR to a doubly robust
estimator \emph{DR}--MDR \citep{bang2005doublyrobust,kallus2020drl}.

The simulations provide encouraging evidence for the proposed MDR
approach. An MDR or \emph{DR}-MDR estimator achieved the lowest RMSE in six of
the seven scenarios, although no single implementation was best in every
setting. The relative performance of the \emph{State} and \emph{Entry}
versions reflects a bias--variance tradeoff: \emph{State} estimators use
changing patient information over follow-up, whereas \emph{Entry}
estimators may be more stable with sparse data. In the nonlinear
scenario, which served as a model-misspecification test, MDR estimators
generally performed better than the non-MDR alternatives.

Several limitations should be noted when applying STTE--MDR.
Firstly, reducing the full history to a current state requires
state sufficiency, conditional-mean transportability, and state-level
overlap. A state that is too simple may omit relevant history, whereas a
state that is too detailed may produce weak overlap and unstable
weights. Like other methods that rely on conditional exchangeability,
STTE--MDR does not directly address unmeasured confounding. 
Double robustness
protects against misspecification of the outcome regression or the
density ratio, but not against misspecification of the target-state
generator.

The current method focuses on discrete time-to-event outcomes.
Future work could extend the risk-set transport framework to continuous
event times and non-absorbing binary outcomes, such as disease flares or
recurrent hospitalizations. The absence of an
empirical application also leaves the method's performance in complex
real-world data to be evaluated.

Statistical inference requires additional care because individuals may
enter several sequential trials and all nuisance components are
estimated. One possible approach is an individual-level bootstrap. In
each bootstrap sample, individuals would be resampled, their eligible
person-trials reconstructed, and all target-state, density-ratio, and
outcome models refitted. Resampling individuals rather than person-trial
rows preserves dependence across repeated trial entries. Bootstrap
confidence intervals could then be obtained from the empirical
distribution of the effect estimates, but their coverage was not
evaluated in this study and requires further investigation. Deriving an
asymptotically valid analytic variance estimator remains an open question for future research,
because uncertainty passes through several estimated components,
including target-state simulation and classifier-based weighting.

%% file: appendix.tex
\section{Proofs and Technical Details}
\label{app:theoretical-proofs}

This appendix proves the results stated in
Section~\ref{subsec:theoretical-properties}. Cellwise proofs use the
compact notation for a pooled strategy--follow-up cell \(c=(z,k)\),
with \(S_c\) denoting a generic augmented state as defined in the main
text. Formal longitudinal assumptions retain the explicit
\((i,m,k,z)\) indices from Section~\ref{subsec:setup-notation}.

\subsection{Formal Identification Assumptions}
\label{app:formal-identification-assumptions}

This subsection gives formal versions of the conditions summarized in
Section~\ref{subsubsec:identification-assumptions}. All notation follows
Section~\ref{subsec:setup-notation}. For \(k<\tau\), let
\[
  \mathcal U_{im,k+1:\tau}^{z}
  =
  \left(
    S_{im,k+1}^{z},\ldots,S_{im,\tau}^{z},
    \Delta Y_{im,k+1}^{z},\ldots,\Delta Y_{im,\tau}^{z}
  \right)
\]
collect the future counterfactual states and interval outcomes under
strategy \(z\). Equalities and conditional-independence statements
below are required only on histories with positive probability in the
corresponding risk set.

\paragraph{A1. No interference.}
For each individual \(i\), the counterfactual process
\(\mathcal U_{im,1:\tau}^{z}\) depends on that individual's treatment
strategy but not on the treatment assignments of other individuals.

\paragraph{A2. Consistency and well-defined strategies.}
Each strategy \(g_z\) assigns a unique treatment at every history to
which it applies. For a row satisfying \(J_{imk}^{z}=1\),
\begin{equation}
  J_{imk}^{z}=1
  \quad\Longrightarrow\quad
  \left(
    S_{imk},
    \Delta Y_{im,k+1}
  \right)
  =
  \left(
    S_{imk}^{z},
    \Delta Y_{im,k+1}^{z}
  \right).
  \label{eq:appendix-consistency}
\end{equation}

\paragraph{A3. Sequential treatment exchangeability.}
At each treatment decision,
\begin{equation}
  \mathcal U_{im,k+1:\tau}^{z}
  \perp\!\!\!\perp
  A_{imk}
  \mid
  \left(
    H_{imk},
    E_{im}=1,
    R_{imk}=1,
    O_{imk}=1,
    D_{im,k-1}^{z}=1
  \right).
  \label{eq:appendix-treatment-exchangeability}
\end{equation}
Thus, after conditioning on the measured predecision history, the
observed treatment decision contains no additional information about
the future counterfactual process under strategy \(z\).

\paragraph{A4. Sequential observation exchangeability.}
Because \(C_{im,k+1}\) is measured after the interval-\(k\) outcome,
the relevant condition concerns the subsequent state and future
outcomes among individuals who survive that interval. For
\(k=0,\ldots,\tau-2\),
\begin{equation}
  \left(
    S_{im,k+1}^{z},
    \mathcal U_{im,k+2:\tau}^{z}
  \right)
  \perp\!\!\!\perp
  C_{im,k+1}
  \mid
  \substack{
    H_{imk},\ A_{imk},\ E_{im}=1,\ R_{imk}=1,\\
    O_{imk}=1,\ D_{imk}^{z}=1,\ \Delta Y_{im,k+1}=0
  }.
  \label{eq:appendix-observation-exchangeability}
\end{equation}
Loss to follow-up after the final interval's outcome does not affect the
estimand at horizon \(\tau\).

\paragraph{A5. History-level positivity.}
For each target-relevant history and treatment decision,
\begin{equation}
  \Pr\!\left\{
    A_{imk}=g_z(H_{imk})
    \,\middle|\,
    \substack{
      H_{imk},\ E_{im}=1,\ R_{imk}=1,\ O_{imk}=1,\\
      D_{im,k-1}^{z}=1
    }
  \right\}
  >0.
  \label{eq:appendix-treatment-positivity}
\end{equation}
If elimination of loss to follow-up is part of the target intervention,
then, among survivors,
\begin{equation}
  \Pr\!\left(
    C_{im,k+1}=0
    \,\middle|\,
    \substack{
      H_{imk},\ E_{im}=1,\ R_{imk}=1,\ O_{imk}=1,\\
      D_{imk}^{z}=1,\ \Delta Y_{im,k+1}=0
    }
  \right)
  >0.
  \label{eq:appendix-observation-positivity}
\end{equation}

\paragraph{A6. State sufficiency and conditional-mean transportability.}
For \(\widetilde s=(m,s)\), define the target and observed-adherent
state-specific event risks by
\[
  \mu_{k,z}^{T}(\widetilde s)
  =
  E\!\left(
    \Delta Y_{im,k+1}^{z}
    \,\middle|\,
    S_{imk}^{z}=s,\,
    E_{im}=1,\,
    R_{imk}^{z}=1
  \right)
\]
and
\[
  \mu_{k,z}^{A}(\widetilde s)
  =
  E\!\left(
    \Delta Y_{im,k+1}
    \,\middle|\,
    S_{imk}=s,\,
    J_{imk}^{z}=1
  \right).
\]
State sufficiency requires
\begin{equation}
  E\!\left(
    \Delta Y_{im,k+1}^{z}
    \,\middle|\,
    H_{imk}^{z},
    E_{im}=1,
    R_{imk}^{z}=1
  \right)
  =
  \mu_{k,z}^{T}(\widetilde S_{imk}^{z}).
  \label{eq:appendix-counterfactual-state-sufficiency}
\end{equation}
Conditional-mean transportability requires
\begin{equation}
  \mu_{k,z}^{A}(\widetilde s)
  =
  \mu_{k,z}^{T}(\widetilde s)
  =:
  \mu_{k}^{z}(\widetilde s),
  \label{eq:appendix-conditional-mean-transportability}
\end{equation}
for \(P_{k,z}^{T,\omega}\)-almost every \(\widetilde s\). If the target law is propagated
from the compressed state rather than the full history, the
survivor-state transition law \(\mathcal{T}_{k,z}\) defined in
Equation~\eqref{eq:survivor-transition-law} must also satisfy, for every
measurable \(B\subseteq\widetilde{\mathcal S}\),
\begin{equation}
  \Pr\!\left(
    \widetilde S_{im,k+1}^{z}\in B
    \,\middle|\,
    \substack{
      H_{imk}^{z},\ E_{im}=1,\ R_{imk}^{z}=1,\\
      \Delta Y_{im,k+1}^{z}=0
    }
  \right)
  =
  \mathcal{T}_{k,z}\!\left(B\mid \widetilde S_{imk}^{z}\right).
  \label{eq:appendix-markov-sufficiency}
\end{equation}

\paragraph{A7. State-level overlap.}
For every target-relevant pooled strategy--follow-up cell,
\begin{equation}
  P_{k,z}^{T,\omega}\ll P_{k,z}^{A,\omega}.
  \label{eq:appendix-state-overlap}
\end{equation}
Because trial index \(m\) is part of the augmented state, this condition
requires observed-adherent support within the same trial for every
target-relevant augmented state.
The derivative
\(r_{k,z}^{\omega}=dP_{k,z}^{T,\omega}/dP_{k,z}^{A,\omega}\)
therefore exists \(P_{k,z}^{A,\omega}\)-almost surely. Population identification requires
integrability of the relevant weighted functions. Finite-variance
inference may additionally require
\begin{equation}
  E_{P_{k,z}^{A,\omega}}
  \left[
    \{r_{k,z}^{\omega}(\widetilde S)\}^{2}
  \right]
  <\infty.
  \label{eq:appendix-ratio-second-moment}
\end{equation}

\paragraph{A8. Correct temporal and risk-set alignment.}
The target and observed laws use the within-interval ordering
\[
  \left(
    L_{imk},
    A_{imk},
    \Delta Y_{im,k+1},
    C_{im,k+1}
  \right)
\]
and condition on being event-free at the beginning of the same
interval. For the absorbing event,
\begin{equation}
  R_{im,k+1}^{z}
  =
  R_{imk}^{z}
  \left(
    1-\Delta Y_{im,k+1}^{z}
  \right).
  \label{eq:appendix-risk-set-recursion}
\end{equation}
Consequently, a simulated individual who experiences the event during
interval \(k\) does not contribute a state to the target risk set at
interval \(k+1\).

\subsection{Derivation of the target risk-set recursion}
\label{app:target-risk-set-recursion}

Let \(\Pr_{\omega}^{z}\) denote the counterfactual law obtained by
mixing the eligible populations of the contributing trials with
baseline weights \(\omega_m\). Write
\(\widetilde S_k^{z}\), \(R_k^{z}\), and
\(\Delta Y_{k+1}^{z}\) for generic variables under this pooled law.
For any measurable \(B\subseteq\widetilde{\mathcal S}\), A8 implies
\begin{align}
  P_{k+1,z}^{T,\omega}(B)
  &=
  \Pr_{\omega}^{z}\!\left(
    \widetilde S_{k+1}^{z}\in B
    \,\middle|\,
    R_{k+1}^{z}=1
  \right) \\
  &=
  \frac{
    \Pr_{\omega}^{z}\!\left(
      \widetilde S_{k+1}^{z}\in B,
      \Delta Y_{k+1}^{z}=0
      \,\middle|\,
      R_{k}^{z}=1
    \right)
  }{
    \Pr_{\omega}^{z}\!\left(
      \Delta Y_{k+1}^{z}=0
      \,\middle|\,
      R_{k}^{z}=1
    \right)
  }.
  \label{eq:appendix-target-law-conditioning}
\end{align}
The denominator of
Equation~\eqref{eq:appendix-target-law-conditioning} is
\(1-\lambda_{k}^{z,\omega}\). By iterated expectation, its numerator
equals
\begin{align}
  &\int_{\widetilde{\mathcal S}}
  \Pr_{\omega}^{z}\!\left(
    \substack{
      \widetilde S_{k+1}^{z}\in B,\\
      \Delta Y_{k+1}^{z}=0
    }
    \,\middle|\,
    \widetilde S_{k}^{z}=\widetilde s,R_{k}^{z}=1
  \right)
  dP_{k,z}^{T,\omega}(\widetilde s) \\
  &\quad=
  \int_{\widetilde{\mathcal S}}
  \left\{1-\mu_{k,z}^{T}(\widetilde s)\right\}
  \mathcal T_{k,z}(B\mid\widetilde s)\,
  dP_{k,z}^{T,\omega}(\widetilde s).
  \label{eq:appendix-target-law-numerator}
\end{align}
Under A6, \(\mu_{k,z}^{T}=\mu_{k}^{z}\), yielding
Equation~\eqref{eq:target-law-recursion} whenever
\(1-\lambda_{k}^{z,\omega}>0\).

\subsection{Change of measure and balance}
\label{app:change-measure}

\begin{proof}[Proof of Proposition~\ref{prop:mdr-change-measure}]
Because \(P_c^T\ll P_c^A\), the Radon--Nikodym derivative
\(r_c=dP_c^T/dP_c^A\) exists and is unique
\(P_c^A\)-almost surely. Its defining integration rule gives
\[
  \int_{\widetilde{\mathcal S}}\phi(s)\,dP_c^T(s)
  =
  \int_{\widetilde{\mathcal S}}
  \phi(s)r_c(s)\,dP_c^A(s)
\]
for every \(P_c^T\)-integrable \(\phi\). Setting \(\phi(s)=1\) gives
\(E_{P_c^A}\{r_c(S_c)\}=1\). Setting
\(\phi(s)=I(s\in B)\) gives
\[
  E_{P_c^A}
  \{r_c(S_c)I(S_c\in B)\}
  =
  P_c^T(B)
\]
for every measurable \(B\subseteq\widetilde{\mathcal S}\).
\end{proof}

\subsection{Identification by direct MDR}
\label{app:mdr-identification}

\begin{proof}[Proof of Theorem~\ref{thm:mdr-identification}]
For \(c=(z,k)\), write
\(\mu_c^T=\mu_{k,z}^T\) and
\(\mu_c^A=\mu_{k,z}^A=\mu_c\). Then
\begin{equation}
  \lambda_c
  =
  \int_{\widetilde{\mathcal S}}
  \mu_c^T(s)\,dP_c^T(s).
  \label{eq:appendix-target-hazard-event-risk}
\end{equation}
By A7 and the Radon--Nikodym integration rule,
\begin{equation}
  \begin{aligned}
  E_{P_c^A}\{r_c(S_c)Y_c\}
  &=
  E_{P_c^A}
  \left[
    E\{r_c(S_c)Y_c\mid S_c\}
  \right]\\
  &=
  E_{P_c^A}
  \left[
    r_c(S_c)E(Y_c\mid S_c)
  \right]\\
  &=
  \int_{\widetilde{\mathcal S}}
    r_c(s)\mu_c^A(s)\,dP_c^A(s)\\
  &=
  \int_{\widetilde{\mathcal S}}
    \mu_c^A(s)\,dP_c^T(s).
  \end{aligned}
  \label{eq:appendix-mdr-observed-event-risk}
\end{equation}
Equations~\eqref{eq:appendix-target-hazard-event-risk}
and~\eqref{eq:appendix-mdr-observed-event-risk} imply
\begin{equation}
  E_{P_c^A}\{r_c(S_c)Y_c\}-\lambda_c
  =
  \int_{\widetilde{\mathcal S}}
  \{\mu_c^A(s)-\mu_c^T(s)\}\,dP_c^T(s).
  \label{eq:appendix-state-insufficiency-bias}
\end{equation}
By A6,
\(\mu_c^A(s)=\mu_c^T(s)\) \(P_c^T\)-almost surely; hence
\begin{equation}
  \lambda_c
  =E_{P_c^A}\{r_c(S_c)Y_c\}.
  \label{eq:appendix-direct-mdr-identification}
\end{equation}
Consequently,
\begin{equation}
  F_{\omega}^{z}(\tau)
  =1-\prod_{k=0}^{\tau-1}
    (1-\lambda_k^{z,\omega}),
  \qquad
  \psi_{\omega,\mathrm{RD}}(\tau)
  =F_{\omega}^{1}(\tau)-F_{\omega}^{0}(\tau).
\end{equation}
\end{proof}

\subsection{Classifier recovery of the density ratio}
\label{app:classifier-ratio}

\begin{proof}[Proof of Lemma~\ref{lem:classifier-density-ratio}]
Let \(p_c^T(s)\) and \(p_c^A(s)\) denote the densities of \(P_c^T\) and
\(P_c^A\), respectively, with respect to a common reference measure.
All probabilities in this proof are taken under the constructed
classifier training distribution. Let
\[
  \rho_c=\Pr(G=1\mid c).
\]
By Bayes' rule,
\begin{equation}
  \begin{aligned}
  h_c(s)
  &=\Pr(G=1\mid S_c=s,c)\\
  &=\frac{
      \rho_c p_c^T(s)
    }{
      \rho_c p_c^T(s)+(1-\rho_c)p_c^A(s)
    }.
  \end{aligned}
  \label{eq:appendix-classifier-posterior-odds}
\end{equation}
Hence, \(P_c^A\)-almost surely,
\begin{equation}
  r_c(s)
  =\frac{dP_c^T}{dP_c^A}(s)
  =\frac{p_c^T(s)}{p_c^A(s)}
  =\frac{1-\rho_c}{\rho_c}
    \frac{h_c(s)}{1-h_c(s)}.
  \label{eq:appendix-classifier-density-ratio}
\end{equation}
\end{proof}

For \(c=(z,k)\), \(\widetilde S=(m,S)\), and
\[
  n_c^T:=\#\{i:G_i=1,c_i=c\},
  \qquad
  n_c^A:=\#\{i:G_i=0,c_i=c\},
\]
\begin{equation}
  n_c^T=n_c^A
  \quad\Longrightarrow\quad
  \rho_c=\frac12
  \quad\Longrightarrow\quad
  r_c(s)=\frac{h_c(s)}{1-h_c(s)}.
  \label{eq:appendix-balanced-classifier-ratio}
\end{equation}

\subsection{Marginalization of the full-history ratio}
\label{app:rao-blackwell}

\begin{proof}[Proof of Theorem~\ref{thm:mdr-rao-blackwell}]
For measurable \(B\subseteq\widetilde{\mathcal S}\), let
\[
  s^{-1}(B)=\{h:s(h)\in B\}.
\]
Under \(S_c=s(H_c)\),
\[
  \begin{aligned}
  P_c^A(B)
  &=\Pr_{Q_c^A}(S_c\in B)\\
  &=\Pr_{Q_c^A}\{s(H_c)\in B\}\\
  &=\Pr_{Q_c^A}\{H_c\in s^{-1}(B)\}\\
  &=Q_c^A\{s^{-1}(B)\},\\
  P_c^T(B)
  &=Q_c^T\{s^{-1}(B)\}.
  \end{aligned}
\]
Assume \(Q_c^T\ll Q_c^A\). Then
\[
  \begin{aligned}
  P_c^A(B)=0
  &\quad\Longrightarrow\quad Q_c^A\{s^{-1}(B)\}=0\\
  &\quad\Longrightarrow\quad Q_c^T\{s^{-1}(B)\}=0\\
  &\quad\Longrightarrow\quad P_c^T(B)=0.
  \end{aligned}
\]
Hence,
\[
  Q_c^T\ll Q_c^A
  \quad\Longrightarrow\quad
  P_c^T\ll P_c^A.
\]
Therefore, \(r_c=dP_c^T/dP_c^A\) exists by Radon--Nikodym and is defined
\(P_c^A\)-almost surely.

For any bounded measurable \(f:\widetilde{\mathcal S}\to\mathbb R\),
\[
  \begin{aligned}
  E_{Q_c^A}\{W_c(H_c)f(S_c)\}
  &=\int W_c(h)f\{s(h)\}\,dQ_c^A(h)\\
  &=\int \frac{dQ_c^T}{dQ_c^A}(h)f\{s(h)\}\,dQ_c^A(h)\\
  &=\int f\{s(h)\}\,dQ_c^T(h)\\
  &=\int_{\widetilde{\mathcal S}} f(u)\,dP_c^T(u)\\
  &=\int_{\widetilde{\mathcal S}}
    \frac{dP_c^T}{dP_c^A}(u)f(u)\,dP_c^A(u)\\
  &=\int r_c\{s(h)\}f\{s(h)\}\,dQ_c^A(h)\\
  &=E_{Q_c^A}\{r_c(S_c)f(S_c)\}.
  \end{aligned}
\]
Taking \(f(u)=I(u\in B)\), for every measurable
\(B\subseteq\widetilde{\mathcal S}\),
\[
  E_{Q_c^A}
  \left[W_c(H_c)I(S_c\in B)\right]
  =
  E_{Q_c^A}
  \left[r_c(S_c)I(S_c\in B)\right].
\]
Let \(Q_c^A(dh\mid S_c=u)\) be a regular conditional law.
For every measurable \(B\subseteq\widetilde{\mathcal S}\),
\[
  \begin{aligned}
  \int_B
    \left\{\int W_c(h)\,Q_c^A(dh\mid S_c=u)\right\}
    dP_c^A(u)
  &=\int W_c(h)I\{s(h)\in B\}\,dQ_c^A(h)\\
  &=\int_{s^{-1}(B)}
    \frac{dQ_c^T}{dQ_c^A}(h)\,dQ_c^A(h)\\
  &=Q_c^T\{s^{-1}(B)\}\\
  &=P_c^T(B)\\
  &=\int_B \frac{dP_c^T}{dP_c^A}(u)\,dP_c^A(u)\\
  &=\int_B r_c(u)\,dP_c^A(u).
  \end{aligned}
\]
Therefore,
\[
  r_c(S_c)
  =E_{Q_c^A}\{W_c(H_c)\mid S_c\},
  \qquad Q_c^A\text{-almost surely}.
\]
If \(E_{Q_c^A}\{W_c(H_c)^2\}<\infty\), then
\[
  \begin{aligned}
  \operatorname{Var}_{Q_c^A}\{W_c(H_c)\}
  &=\operatorname{Var}_{Q_c^A}
    \left[E_{Q_c^A}\{W_c(H_c)\mid S_c\}\right]\\
  &\quad+E_{Q_c^A}
    \left[\operatorname{Var}_{Q_c^A}\{W_c(H_c)\mid S_c\}\right]\\
  &=\operatorname{Var}_{P_c^A}\{r_c(S_c)\}\\
  &\quad+E_{Q_c^A}
    \left[\operatorname{Var}_{Q_c^A}\{W_c(H_c)\mid S_c\}\right]\\
  &\geq \operatorname{Var}_{P_c^A}\{r_c(S_c)\}.
  \end{aligned}
\]
\end{proof}

\paragraph{Connection with cumulative IPACW.}
For \(c=(z,k)\), let \(h_j\) denote the history through visit
\(j\). Let \(\pi_{c,j}^{D}(h_j)\) and \(\pi_{c,j+1}^{O}(h_j)\) denote,
respectively, the conditional probabilities of adherence at visit \(j\)
and continued observation after visit \(j\), with the history and risk-set
conditioning specified in A5. 
Define
\[
  G_c(h)
  =
  \prod_{j=0}^{k}\pi_{c,j}^{D}(h_j)
  \prod_{j=0}^{k-1}\pi_{c,j+1}^{O}(h_j).
\]
Then
\[
  w_c^{\mathrm{IPACW}}(h)
  =G_c(h)^{-1}.
\]
Let \(J_c=1\) indicate inclusion in the observed-adherent risk set. Under
the identification assumptions,
\[
  \begin{aligned}
  G_c(h)
  &=\Pr(J_c=1\mid H_c=h),\\
  \Pr(H_c\in\mathcal B,J_c=1)
  &=\int_{\mathcal B}G_c(h)\,dQ_c^T(h),\\
  \Pr(J_c=1)
  &=\int G_c(h)\,dQ_c^T(h)
  =E_{Q_c^T}\{G_c(H_c)\}>0.
  \end{aligned}
\]
Consequently, for every measurable history set \(\mathcal B\),
\[
  \begin{aligned}
  Q_c^A(\mathcal B)
  &=\Pr(H_c\in\mathcal B\mid J_c=1)\\
  &=\frac{
    \Pr(H_c\in\mathcal B,J_c=1)
  }{
    \Pr(J_c=1)
  }\\
  &=\frac{
    \displaystyle\int_{\mathcal B}G_c(h)\,dQ_c^T(h)
  }{
    E_{Q_c^T}\{G_c(H_c)\}
  }.
  \end{aligned}
\]
By the definition of the Radon--Nikodym derivative, this means
\[
  \begin{aligned}
  \frac{dQ_c^A}{dQ_c^T}(h)
  &=\frac{G_c(h)}{E_{Q_c^T}\{G_c(H_c)\}},\\
  W_c(h)
  =\frac{dQ_c^T}{dQ_c^A}(h)
  &=\frac{E_{Q_c^T}\{G_c(H_c)\}}{G_c(h)}.
  \end{aligned}
\]
Moreover,
\[
  \begin{aligned}
  E_{Q_c^A}\{w_c^{\mathrm{IPACW}}(H_c)\}
  &=\int \frac{1}{G_c(h)}\,dQ_c^A(h)\\
  &=\frac{1}{E_{Q_c^T}\{G_c(H_c)\}}
    \int dQ_c^T(h)\\
  &=\frac{1}{E_{Q_c^T}\{G_c(H_c)\}},
  \end{aligned}
\]
and therefore
\[
  \frac{
    w_c^{\mathrm{IPACW}}(h)
  }{
    E_{Q_c^A}\{w_c^{\mathrm{IPACW}}(H_c)\}
  }
  =
  \frac{E_{Q_c^T}\{G_c(H_c)\}}{G_c(h)}
  =W_c(h).
\]
Hence,
\[
  \begin{aligned}
  r_c(S_c)
  &=E_{Q_c^A}\{W_c(H_c)\mid S_c\}\\
  &=\frac{
    E_{Q_c^A}\{w_c^{\mathrm{IPACW}}(H_c)\mid S_c\}
  }{
    E_{Q_c^A}\{w_c^{\mathrm{IPACW}}(H_c)\}
  }.
  \end{aligned}
\]
If \(E_{Q_c^A}\{w_c^{\mathrm{IPACW}}(H_c)^2\}<\infty\), then
\[
  \operatorname{Var}_{P_c^A}\{r_c(S_c)\}
  \leq
  \frac{
    \operatorname{Var}_{Q_c^A}\{w_c^{\mathrm{IPACW}}(H_c)\}
  }{
    \left[E_{Q_c^A}\{w_c^{\mathrm{IPACW}}(H_c)\}\right]^2
  }.
\]

\subsection{Conditional double robustness}
\label{app:double-robustness}

\begin{proof}[Proof of Theorem~\ref{thm:dr-mdr}]
Let \(\mathbb P_{A,c}\) denote the empirical law of the
observed-adherent cell and \(\mathbb P_{T,c}\) the empirical law of its
simulated target states. The unprojected estimator corresponding to
Equation~\eqref{eq:dr-mdr-hazard} is
\begin{equation}
  \widehat\Psi_c
  =
  \mathbb P_{T,c}\widehat\mu_c
  +
  \frac{
    \mathbb P_{A,c}
    \left[\widehat r_c\{Y_c-\widehat\mu_c\}\right]
  }{
    \mathbb P_{A,c}\widehat r_c
  },
  \qquad
  \widehat\lambda_c^{\mathrm{DR}}
  =\Pi_{[0,1]}(\widehat\Psi_c).
  \label{eq:appendix-dr-mdr-unprojected}
\end{equation}
Set
\[
  d_c=E_{P_c^A}\{\overline r_c(S_c)\},
  \qquad
  \overline r_c^{\,\dagger}=\frac{\overline r_c}{d_c}.
\]
The convergence assumptions in Theorem~\ref{thm:dr-mdr} give
\[
  \begin{aligned}
  \mathbb P_{T,c}\widehat\mu_c
  &\xrightarrow{p}
  E_{P_c^T}\{\overline\mu_c(S_c)\},\\
  \mathbb P_{A,c}\widehat r_c
  &\xrightarrow{p}
  E_{P_c^A}\{\overline r_c(S_c)\}=d_c,\\
  \mathbb P_{A,c}
  \left[\widehat r_c\{Y_c-\widehat\mu_c\}\right]
  &\xrightarrow{p}
  E_{P_c^A}
  \left[\overline r_c(S_c)
    \{Y_c-\overline\mu_c(S_c)\}\right].
  \end{aligned}
\]
Since \(0<d_c<\infty\), Slutsky's theorem gives
\[
  \frac{
    \mathbb P_{A,c}
    \left[\widehat r_c\{Y_c-\widehat\mu_c\}\right]
  }{
    \mathbb P_{A,c}\widehat r_c
  }
  \xrightarrow{p}
  E_{P_c^A}
  \left[
    \overline r_c^{\,\dagger}(S_c)
    \{Y_c-\overline\mu_c(S_c)\}
  \right].
\]
Therefore,
\begin{equation}
  \widehat\Psi_c
  \xrightarrow{p}
  \Psi_c(\overline\mu_c,\overline r_c),
  \label{eq:dr-mdr-probability-limit}
\end{equation}
where
\begin{equation}
  \Psi_c(\overline\mu_c,\overline r_c)
  =
  E_{P_c^T}\{\overline\mu_c(S_c)\}
  +
  E_{P_c^A}
  \left[
    \overline r_c^{\,\dagger}(S_c)
    \{Y_c-\overline\mu_c(S_c)\}
  \right].
  \label{eq:dr-mdr-population-functional}
\end{equation}
Since \(\overline r_c^{\,\dagger}(S_c)\) and
\(\overline\mu_c(S_c)\) are functions of \(S_c\), and
\(\mu_c(S_c)=E_{P_c^A}(Y_c\mid S_c)\), iterated expectation gives
\[
  \begin{aligned}
  &E_{P_c^A}
  \left[
    \overline r_c^{\,\dagger}(S_c)
    \{Y_c-\overline\mu_c(S_c)\}
  \right]\\
  &\quad=
  E_{P_c^A}
  \left[
    E_{P_c^A}
    \left\{
      \overline r_c^{\,\dagger}(S_c)
      \{Y_c-\overline\mu_c(S_c)\}
      \mid S_c
    \right\}
  \right]\\
  &\quad=
  E_{P_c^A}
  \left[
    \overline r_c^{\,\dagger}(S_c)
    \left\{
      E_{P_c^A}(Y_c\mid S_c)-\overline\mu_c(S_c)
    \right\}
  \right]\\
  &\quad=
  E_{P_c^A}
  \left[
    \overline r_c^{\,\dagger}(S_c)
    \{\mu_c(S_c)-\overline\mu_c(S_c)\}
  \right].
  \end{aligned}
\]
Proposition~\ref{prop:mdr-change-measure} also gives
\[
  E_{P_c^T}\{\overline\mu_c(S_c)\}
  =
  E_{P_c^A}\{r_c(S_c)\overline\mu_c(S_c)\},
  \qquad
  \lambda_c
  =
  E_{P_c^A}\{r_c(S_c)\mu_c(S_c)\}.
\]
Subtracting \(\lambda_c\) and collecting terms produces
\begin{equation}
  \Psi_c(\overline\mu_c,\overline r_c)-\lambda_c
  =
  E_{P_c^A}
  \left[
    \{\overline r_c^{\,\dagger}(S_c)-r_c(S_c)\}
    \{\mu_c(S_c)-\overline\mu_c(S_c)\}
  \right].
  \label{eq:dr-mdr-bias-remainder}
\end{equation}
The remainder is zero when
\(\overline\mu_c=\mu_c\). It is also zero when
\(\overline r_c=a_c r_c\) for \(a_c>0\), because
\(E_{P_c^A}(r_c)=1\) implies
\(\overline r_c^{\,\dagger}=r_c\).
Thus, under either condition,
\(\widehat\Psi_c\xrightarrow{p}\lambda_c\). By the continuous mapping
theorem, since \(\Pi_{[0,1]}\) is continuous and
\(\lambda_c\in[0,1]\),
\[
  \widehat\lambda_c^{\mathrm{DR}}
  =\Pi_{[0,1]}(\widehat\Psi_c)
  \xrightarrow{p}
  \Pi_{[0,1]}(\lambda_c)
  =\lambda_c.
\]
\end{proof}

\subsubsection{Target-law qualification}
\label{app:target-law-error}

Let \(\overline P_c^T\) be the probability limit of a possibly
misspecified target-state generator. Using the preceding
iterated-expectation identity,
\(\lambda_c=E_{P_c^T}\{\mu_c(S_c)\}\), and
Proposition~\ref{prop:mdr-change-measure}, the error decomposes as
\begin{equation}
  \begin{aligned}
  &\Psi_c(
    \overline\mu_c,\overline r_c;
    \overline P_c^T
  )
  -\lambda_c\\
  &=
  E_{\overline P_c^T}\{\overline\mu_c(S_c)\}
  +E_{P_c^A}
  \left[
    \overline r_c^{\,\dagger}(S_c)
    \{\mu_c(S_c)-\overline\mu_c(S_c)\}
  \right]
  -E_{P_c^T}\{\mu_c(S_c)\}\\
  &=
  \left[
    E_{\overline P_c^T}\{\overline\mu_c(S_c)\}
    -E_{P_c^T}\{\overline\mu_c(S_c)\}
  \right]\\
  &\quad+
  E_{P_c^T}
  \{\overline\mu_c(S_c)-\mu_c(S_c)\}
  +E_{P_c^A}
  \left[
    \overline r_c^{\,\dagger}(S_c)
    \{\mu_c(S_c)-\overline\mu_c(S_c)\}
  \right]\\
  &=
  \left[
    E_{\overline P_c^T}\{\overline\mu_c(S_c)\}
    -E_{P_c^T}\{\overline\mu_c(S_c)\}
  \right]\\
  &\quad+
  E_{P_c^A}
  \left[
    r_c(S_c)\{\overline\mu_c(S_c)-\mu_c(S_c)\}
    +\overline r_c^{\,\dagger}(S_c)
    \{\mu_c(S_c)-\overline\mu_c(S_c)\}
  \right]\\
  &=
  \left[
    E_{\overline P_c^T}\{\overline\mu_c(S_c)\}
    -E_{P_c^T}\{\overline\mu_c(S_c)\}
  \right]\\
  &\quad+
  E_{P_c^A}
  \left[
    \{\overline r_c^{\,\dagger}(S_c)-r_c(S_c)\}
    \{\mu_c(S_c)-\overline\mu_c(S_c)\}
  \right].
  \end{aligned}
  \label{eq:appendix-target-law-error}
\end{equation}
The difference in square brackets is target-generator error and is not
removed by correctly specifying either the outcome regression or the
density ratio.
This establishes why Theorem~\ref{thm:dr-mdr} is conditional on
consistent target-risk-set generation and why the estimator is not
triply robust.

\subsubsection{Nuisance-estimation remainder and target-generation error}
\label{app:dr-remainder}

Under the correct target-state law, the DR-MDR bias remainder is
\begin{equation}
  E_{P_c^A}
  \left[
    \{\overline r_c^{\,\dagger}(S_c)-r_c(S_c)\}
    \{\mu_c(S_c)-\overline\mu_c(S_c)\}
  \right].
  \label{eq:appendix-dr-remainder}
\end{equation}
By the Cauchy--Schwarz inequality,
\begin{equation}
  \begin{aligned}
  &\left|
    E_{P_c^A}
    \left[
      \{\overline r_c^{\,\dagger}(S_c)-r_c(S_c)\}
      \{\mu_c(S_c)-\overline\mu_c(S_c)\}
    \right]
  \right|\\
  &\quad\leq
    \left[
      E_{P_c^A}
      \left\{
        \overline r_c^{\,\dagger}(S_c)-r_c(S_c)
      \right\}^{2}
    \right]^{1/2}\\
  &\qquad\quad\times
    \left[
      E_{P_c^A}
      \left\{
        \overline\mu_c(S_c)-\mu_c(S_c)
      \right\}^{2}
    \right]^{1/2}.
  \end{aligned}
  \label{eq:appendix-dr-product-bound}
\end{equation}
Following the product-rate condition in
\citet{chernozhukov2018dml}, the remainder is \(o_p(n^{-1/2})\) if
\begin{equation}
  \begin{aligned}
  &\left[
    E_{P_c^A}
    \left\{
      \widehat r_c^{\,\dagger}(S_c)-r_c(S_c)
    \right\}^{2}
  \right]^{1/2}\\
  &\quad\times
  \left[
    E_{P_c^A}
    \left\{
      \widehat\mu_c(S_c)-\mu_c(S_c)
    \right\}^{2}
  \right]^{1/2}
  =o_p(n^{-1/2}).
  \end{aligned}
  \label{eq:appendix-dr-product-rate}
\end{equation}
This condition controls the nuisance-product remainder; \(\sqrt{n}\)-rate
inference additionally requires appropriate regularity conditions and
first-order control of target-law estimation and target-simulation
error.

Let \(\widehat P_{c,\mathrm{gen}}^T\) denote the target-state law implied
by the fitted transition models
\citep{taubman2009parametricgformula,young2011parametricgformula,mcgrath2020gformula}.
Given \(N_c\) target states
\(S_{c1}^T,\ldots,S_{cN_c}^T\) simulated from this law, define
\begin{equation}
  \mathbb P_{T,c}\{\widehat\mu_c(S_c)\}
  =
  \frac{1}{N_c}\sum_{b=1}^{N_c}\widehat\mu_c(S_{cb}^T).
  \label{eq:appendix-target-monte-carlo-mean}
\end{equation}
The target-integration error decomposes as
\begin{equation}
  \begin{aligned}
  &\mathbb P_{T,c}\{\widehat\mu_c(S_c)\}
  -E_{P_c^T}\{\widehat\mu_c(S_c)\}\\
  &=
  \left[
    \mathbb P_{T,c}\{\widehat\mu_c(S_c)\}
    -E_{\widehat P_{c,\mathrm{gen}}^T}
      \{\widehat\mu_c(S_c)\}
  \right]\\
  &\quad+
  \left[
    E_{\widehat P_{c,\mathrm{gen}}^T}
      \{\widehat\mu_c(S_c)\}
    -E_{P_c^T}\{\widehat\mu_c(S_c)\}
  \right].
  \end{aligned}
  \label{eq:appendix-target-error-decomposition}
\end{equation}
By the Monte Carlo central limit
theorem:
\begin{equation}
  \mathbb P_{T,c}\{\widehat\mu_c(S_c)\}
  -
  E_{\widehat P_{c,\mathrm{gen}}^T}
  \{\widehat\mu_c(S_c)\}
  =
  O_p(N_c^{-1/2}).
  \label{eq:appendix-target-monte-carlo-rate}
\end{equation}

If
\(\widehat P_{c,\mathrm{gen}}^T\xrightarrow{p}\overline P_c^T\) and
\(\widehat\mu_c\xrightarrow{p}\overline\mu_c\), then
\begin{equation}
  E_{\widehat P_{c,\mathrm{gen}}^T}
    \{\widehat\mu_c(S_c)\}
  -E_{P_c^T}\{\widehat\mu_c(S_c)\}
  \xrightarrow{p}
  E_{\overline P_c^T}\{\overline\mu_c(S_c)\}
  -E_{P_c^T}\{\overline\mu_c(S_c)\},
  \label{eq:appendix-target-generator-limit}
\end{equation}
which equals zero when \(\overline P_c^T=P_c^T\)
\citep{chiu2023gformulaspecification}. Moreover,
\begin{equation}
  \begin{aligned}
  N_c\asymp n
  \quad\Longrightarrow\quad
  &\mathbb P_{T,c}\{\widehat\mu_c(S_c)\}
  -E_{\widehat P_{c,\mathrm{gen}}^T}
    \{\widehat\mu_c(S_c)\}\\
  &=O_p(n^{-1/2}).
  \end{aligned}
  \label{eq:appendix-target-monte-carlo-scaling}
\end{equation}

A subject-level bootstrap should repeat target-model fitting,
target-state simulation, classifier fitting, outcome-model fitting, and
the final augmentation in every replicate
\citep{westreich2012parametricgformula}. 

\subsubsection{Propagation to cumulative risks}
\label{app:dr-cumulative-risk}

Suppose \(\tau<\infty\) and, for every target-positive cell
\(c=(z,k)\),
\begin{equation}
  \widehat P_{c,\mathrm{gen}}^T\xrightarrow{p}P_c^T,
  \qquad
  \left\{
  \overline\mu_c=\mu_c
  \right\}
  \ \lor\ 
  \left\{
  \overline r_c=a_c r_c\text{ for some }a_c>0
  \right\}.
  \label{eq:appendix-cellwise-dr-conditions}
\end{equation}
By Theorem~\ref{thm:dr-mdr},
\begin{equation}
  \widehat\lambda_k^{z,\omega,\mathrm{DR}}
  \xrightarrow{p}
  \lambda_k^{z,\omega},
  \qquad
  z\in\{0,1\},\quad k=0,\ldots,\tau-1.
  \label{eq:appendix-cellwise-hazard-convergence}
\end{equation}
Let
\begin{equation}
  \boldsymbol\lambda^z
  =
  (\lambda_0^{z,\omega},\ldots,
   \lambda_{\tau-1}^{z,\omega})^\top,
  \qquad
  g(\boldsymbol\lambda^z)
  =
  1-\prod_{k=0}^{\tau-1}(1-\lambda_k^{z,\omega}).
  \label{eq:appendix-cumulative-risk-map}
\end{equation}
Since \(g\) is continuous,
\begin{equation}
  \widehat F_{\omega,\mathrm{DR}}^z(\tau)
  =g(\widehat{\boldsymbol\lambda}^{z,\mathrm{DR}})
  \xrightarrow{p}
  g(\boldsymbol\lambda^z)
  =F_\omega^z(\tau),
  \qquad z\in\{0,1\},
  \label{eq:appendix-cumulative-risk-convergence}
\end{equation}
and
\begin{equation}
  \widehat\psi_{\omega,\mathrm{DR}}(\tau)
  =
  \widehat F_{\omega,\mathrm{DR}}^1(\tau)
  -\widehat F_{\omega,\mathrm{DR}}^0(\tau)
  \xrightarrow{p}
  F_\omega^1(\tau)-F_\omega^0(\tau)
  =\psi_{\omega,\mathrm{RD}}(\tau).
  \label{eq:appendix-risk-difference-convergence}
\end{equation}

\clearpage
\section{Detailed Simulation Design}
\label{app:simulation-design}

\subsection{Reference longitudinal data-generating process}

For each simulated individual \(i=1,\ldots,n\), the time-invariant
covariates were generated as
\begin{align}
  X_{i1} &\sim \operatorname{Bernoulli}(0.5), \nonumber\\
  X_{i2} &\sim N(0,1), \nonumber\\
  X_{i3}\mid X_{i1},X_{i2}
  &\sim
  \operatorname{Bernoulli}\!\left[
    \operatorname{expit}
    \left\{-0.20+0.30X_{i1}+0.50X_{i2}\right\}
  \right].
  \label{eq:simulation-baseline-covariates}
\end{align}
The time-varying covariate at cohort baseline was
\begin{equation}
  L_{i0}
  =
  0.50X_{i1}
  +0.70X_{i2}
  +0.40X_{i3}
  +\varepsilon_{i0},
  \qquad
  \varepsilon_{i0}\sim N(0,1).
  \label{eq:simulation-baseline-L}
\end{equation}

At the beginning of month \(t=0,\ldots,23\), treatment was generated
among individuals who were event-free and uncensored according to
\begin{align}
  \operatorname{logit}
  \Pr(A_{it}=1\mid A_{i,t-1},L_{it},X_i)
  ={}&
  -1.20
  +2.00A_{i,t-1}
  +0.80L_{it} \nonumber\\
  &+0.30X_{i1}
  +0.20X_{i2}
  +0.10X_{i3}
  +0.02t,
  \label{eq:simulation-treatment-model}
\end{align}
where \(A_{i,-1}=0\). Conditional on being at risk at the start of the
interval, the event hazard was
\begin{align}
  \operatorname{logit}
  \Pr(\Delta Y_{i,t+1}=1\mid A_{it},L_{it},X_i)
  ={}&
  -6.50
  -0.40A_{it}
  +0.80L_{it} \nonumber\\
  &+0.30X_{i1}
  +0.20X_{i2}
  +0.20X_{i3}
  +0.03t.
  \label{eq:simulation-outcome-model}
\end{align}
Among individuals who remained event-free through the interval, loss to
follow-up was generated from
\begin{align}
  \operatorname{logit}
  \Pr(\Delta C_{i,t+1}=1\mid A_{it},L_{it},X_i)
  ={}&
  -5.00
  +0.20A_{it}
  +0.40L_{it} \nonumber\\
  &+0.20X_{i1}
  +0.20X_{i2}
  +0.10X_{i3}.
  \label{eq:simulation-censoring-model}
\end{align}
Finally, for individuals who remained event-free and uncensored, the
next covariate value was generated as
\begin{align}
  L_{i,t+1}
  ={}&
  0.60L_{it}
  +0.30X_{i1}
  +0.40X_{i2}
  +0.20X_{i3} \nonumber\\
  &-0.50A_{it}
  +\varepsilon_{i,t+1},
  \qquad
  \varepsilon_{i,t+1}\sim N(0,1).
  \label{eq:simulation-L-transition}
\end{align}
Thus, events were generated before loss to follow-up, and the next
covariate value was used only for individuals remaining under
observation. Unless otherwise specified for a simulation scenario,
treatment, event, and censoring probabilities were restricted to
\([0.01,0.99]\).

\subsection{Monte Carlo target value}

For each scenario, the true causal contrast was approximated using a
separate Monte Carlo population of \(50{,}000\) individuals. We first
simulated their natural histories without loss to follow-up and
identified all eligible person-trials. Starting from each eligible
trial-entry state, we then simulated 12-month outcomes under \(g_1\)
and \(g_0\). Paired random-number streams were used to reduce simulation
noise in the contrast. If \(\mathcal E\) denotes the resulting collection
of eligible person-trials, the target risks and risk difference were
\begin{equation}
  F^{z}(12)
  =
  \frac{1}{|\mathcal E|}
  \sum_{(i,m)\in\mathcal E}
  Y_{im,12}^{z},
  \qquad
  \psi(12)
  =
  F^{1}(12)-F^{0}(12).
  \label{eq:simulation-monte-carlo-truth}
\end{equation}
Each eligible person-trial therefore received equal weight, matching the
pooled sequential-trial estimand used in the analysis.

\clearpage
\section{Detailed Simulation Estimator Specifications}
\label{app:simulation-estimator-details}

This section supplements the estimator overview in
Section~\ref{subsec:estimators-compared}. All estimators targeted the
same pooled 12-month per-protocol risk difference.

\subsection{Target-state simulation}

The IPCW--State and DR-IPCW--State estimators and all four MDR estimators
used the strategy-specific target risk sets described in the main text.
The IPCW--Entry and DR-IPCW--Entry estimators did not require target-state
simulation. For each strategy, the simulated target sample was ten times
the number of eligible person-trials.

The covariate-transition model in
Equation~\eqref{eq:simulation-mdr-transition-model} was fitted by linear
regression among contiguous event-free and uncensored observations.
Forward simulation used Gaussian residuals with the fitted residual
standard deviation. The event-depletion model in
Equation~\eqref{eq:simulation-mdr-target-event-model} was fitted by
logistic regression among observations at risk. At the first simulated
step, \(L_{i,t-1}\) was initialized to the current trial-entry value.
Treatment was fixed to \(A_{it}=z\), individuals experiencing an event
were removed from subsequent risk sets, and \(L_{i,t+1}\) was simulated
only for those remaining event-free.

\subsection{MDR implementation}

All four MDR estimators used the follow-up-balanced, cell-normalized
density-ratio construction in
Section~\ref{subsec:mdr-weighted-algorithm}. A separate
\texttt{ranger} probability forest was fitted for each strategy, pooling
all follow-up months. Within each month, target rows were sampled to
match the number of observed-adherent rows. Each forest used 500 trees;
the remaining forest tuning parameters used the package defaults
\citep{breiman2001randomforests}.

Classifier probabilities from
Equation~\eqref{eq:simulation-mdr-classifier} were bounded to
\([0.001,0.999]\). The resulting odds were converted to the normalized
MDR weights in Equations~\eqref{eq:classifier-mdr}
and~\eqref{eq:cell-normalized-mdr}, with no post-estimation truncation.
No separate natural-censoring weight was multiplied by the MDR because
the density ratio directly transported the observed-adherent risk set to
the no-censoring target risk set. The MDR nuisance models were fitted on
the full simulated sample without cross-fitting.

For MDR--Entry and MDR--State, the outcome models in
Equations~\eqref{eq:simulation-entry-outcome-model}
and~\eqref{eq:simulation-state-outcome-model} were fitted using the MDR
weights. DR-MDR--Entry and DR-MDR--State fitted the same outcome models
without outcome weights and used the MDR weights in the residual
correction. For the State version, correction was performed on the
cell-specific hazard scale before the hazards were combined into
cumulative risks.

\subsection{IPCW implementation}

Because the strategies had no grace period, remaining compatible with
strategy \(z\) at month \(j\) was equivalent to receiving
\(A_{imj}=z\). The treatment models in
Equation~\eqref{eq:simulation-stabilized-artificial-weight} were fitted
among eligible person-trial intervals that were event-free and
uncensored at the start of month \(j\) and compatible with strategy
\(z\) through month \(j-1\). The natural-censoring models in
Equation~\eqref{eq:simulation-stabilized-natural-weight} were fitted on
the subset that also remained compatible at month \(j\) and event-free
during interval \(j\); equivalently, these rows satisfied
\(J_{imj}^{z}=1\) and \(\Delta Y_{im,j+1}=0\). All probabilities were
estimated by logistic regression, with \(j\) represented by indicators.

The artificial-censoring product included factors from \(j=0\) through
\(k\). The natural-censoring product ended at \(k-1\) because the event
was observed before natural censoring within month \(k\); this component
equaled one at \(k=0\). Estimated probabilities were bounded to
\([0.01,0.99]\), and the resulting weights were not truncated.

IPCW--Entry and IPCW--State fitted the appropriate outcome model in
Equations~\eqref{eq:simulation-entry-outcome-model}
and~\eqref{eq:simulation-state-outcome-model} using the stabilized
weights. DR-IPCW--Entry and DR-IPCW--State fitted the same formulas
without outcome weights and incorporated the stabilized weights through
the residual correction.

\subsection{TrialEmulation}

This per-protocol implementation censored individuals when they deviated
from the treatment assigned at trial entry. For a trial beginning at
month \(m\), the cohort month corresponding to follow-up month \(j\) is
\(m+j\). Numerator and denominator models were fitted on the original
cohort records supplied to the package and separately according to
treatment at the previous visit. The stabilized switching weight for
strategy \(z\) was
\begin{align}
  \widehat{\mathrm{SW}}_{imk}^{\mathrm{TE},A}(z)
  ={}&
  \prod_{j=1}^{k}
  \frac{
    \widehat{\Pr}\!\left(
      A_{imj}=z
      \,\middle|\,
      A_{im,j-1}=z,X_i
    \right)
  }{
    \widehat{\Pr}\!\left(
      A_{imj}=z
      \,\middle|\,
      A_{im,j-1}=z,X_i,L_{imj},m+j
    \right)
  }.
  \label{eq:simulation-trialemulation-switch-weight}
\end{align}
The corresponding natural-censoring weight was
\begin{align}
  \widehat{\mathrm{SW}}_{imk}^{\mathrm{TE},C}(z)
  ={}&
  \prod_{j=1}^{k}
  \frac{
    \widehat{\Pr}\!\left(
      \Delta C_{im,j+1}=0
      \,\middle|\,
      A_{im,j-1}=z,X_i
    \right)
  }{
    \widehat{\Pr}\!\left(
      \Delta C_{im,j+1}=0
      \,\middle|\,
      A_{im,j-1}=z,X_i,L_{imj},m+j
    \right)
  }.
  \label{eq:simulation-trialemulation-censor-weight}
\end{align}
The outcome analysis used their product,
\begin{equation}
  \widehat{\mathrm{SW}}_{imk}^{\mathrm{TE}}(z)
  =
  \widehat{\mathrm{SW}}_{imk}^{\mathrm{TE},A}(z)
  \widehat{\mathrm{SW}}_{imk}^{\mathrm{TE},C}(z).
  \label{eq:simulation-trialemulation-combined-weight}
\end{equation}
Both components equal one at trial entry. The package censoring models
were not additionally restricted to rows with
\(\Delta Y_{im,j+1}=0\), and the implementation applied no additional
probability bounds or post-estimation weight truncation. The outcome
model was the Entry specification in
Equation~\eqref{eq:simulation-entry-outcome-model}.

TrialEmulation and IPCW--Entry both used stabilized weights. They also
used the same weighted pooled-logistic outcome model, but their weight
alignment differed. IPCW--Entry included artificial-censoring factors from \(j=0\)
through \(k\) and natural-censoring factors from \(j=0\) through \(k-1\).
TrialEmulation used both factors from \(j=1\) through \(k\), thereby
treating entry treatment as the assigned strategy without reweighting it.
Its censoring product excluded the entry-interval factor but included
the current-interval factor. IPCW--Entry instead carried the
entry-interval censoring factor forward and excluded censoring after the
current outcome. IPCW--Entry fitted models using follow-up month \(j\)
and trial-entry month \(m\), whereas TrialEmulation used the original
cohort time scale \(m+j\).

\subsection{Longitudinal TMLE--GLM}

Longitudinal TMLE was applied to each eligible person-trial history in
wide format. Let \(W_{im}=(m,X_i,L_{im})\) denote the trial-entry
variables. In R-style notation, its node-specific working models were
\begin{align*}
  A_{im0}
  &\sim W_{im},\\
  A_{imj}
  &\sim W_{im}+L_{imj}+A_{im,j-1},
  \qquad j\geq1,\\
  \mathbf{1}(\Delta C_{im,j+1}=0)
  &\sim W_{im}+L_{imj}+A_{imj},\\
  Q_{im,j+1}
  &\sim W_{im}+L_{imj}+A_{imj}.
\end{align*}
Here \(Q_{im,j+1}\) denotes the iterated outcome regression at each
outcome or covariate node. At \(j=0\), \(L_{im0}=L_{im}\) was already
included in \(W_{im}\). All components used generalized linear models.

We used the package's targeted TMLE option. We did not use its untargeted
g-computation estimator. The initial sequential \(Q\)-regressions supplied
the g-computation component, and the targeting step incorporated the
estimated treatment and censoring mechanisms. Under the usual
identification and regularity conditions, the estimator is consistent
if either all sequential outcome regressions or all treatment and
censoring models are correctly specified \citep{lendle2017ltmle}.
Treatment was intervened on through the always-treated and never-treated
regimes, natural censoring was represented by censoring nodes, and the
cumulative treatment-and-censoring mechanism was bounded to
\([0.01,1]\).

\subsection{Methods outside the comparison}

The matched Entry and State specifications were used to isolate the
effect of weight construction when comparing MDR and IPCW. Longitudinal
CBPS, joint calibration, residual balancing, and kernel optimal weighting
were not included because adapting their published formulations to
overlapping sequential trials, artificial censoring, and
strategy-specific survivor risk sets would require additional
method-specific choices outside the scope of this study.

\clearpage
\section{Additional Simulation-Scenario Diagnostics}
\label{app:scenario-diagnostics}

This section provides the exact parameter changes, diagnostic
definitions, and supporting results for the scenarios summarized in
Section~\ref{subsec:simulation-scenarios}. Because the data-generating
process (DGP) was known, the DGP-implied probabilities and odds ratios
were calculated directly from its specified models. In the strong-confounding
scenario, the coefficient of \(L_{it}\) increased from \(0.80\) to
\(1.20\) in both the treatment and event models. In the poor-positivity
scenario, its coefficient in the treatment model increased from \(0.80\)
to \(2.00\), while its event-model coefficient remained \(0.80\). In the
poor-adherence scenario, the coefficient of \(A_{i,t-1}\) in the
treatment model decreased from \(2.00\) to \(0.60\). In the rare-event
scenario, the event-model intercept decreased from \(-6.50\) to
\(-8.00\). Finally, in the nonlinear scenario, the treatment linear
predictor included \(0.35L_{it}^{2}\), and the event linear predictor
included \(0.20L_{it}^{2}-0.25A_{it}L_{it}\).

Treatment, event, and censoring probabilities were restricted to
\([0.01,0.99]\) in all scenarios except the rare-event scenario, for
which the lower event-probability bound was \(10^{-4}\). Thus, the
poor-positivity scenario represents severe practical nonoverlap rather
than structural nonpositivity.

We assessed confounding by examining how strongly \(L_{it}\) was related
to both treatment assignment and event risk. Using the known DGP, we
compared the treatment and event odds when
\(L_{it}\) was set to a relatively high value (the third quartile) versus
a relatively low value (the first quartile), while keeping all other
variables unchanged.
To assess practical positivity, we examined how often treatment was
either very unlikely or almost certain. Each person contributed one
observation for every month they remained under follow-up. We reported
the percentage of these monthly observations for which the treatment
probability was at most \(0.05\) or at least \(0.95\). Adherence was
measured as the percentage of people who remained on treatment from one
month to the next. Event frequency was measured as the estimated
percentage of people who experienced the event within 12 months.

Table~\ref{tab:simulation-scenario-diagnostics} reports the numerical
summaries. Figure~\ref{fig:appendix-positivity-distributions} shows the
full distribution of the DGP-implied treatment-assignment probabilities, and
Figure~\ref{fig:appendix-event-risk-curves} shows how cumulative event
risk evolved over the complete 24-month observational period.

\input{scenario_diagnostics/table_scenario_diagnostics}

\begin{figure}[p]
  \centering
  \includegraphics[width=0.8\textwidth]{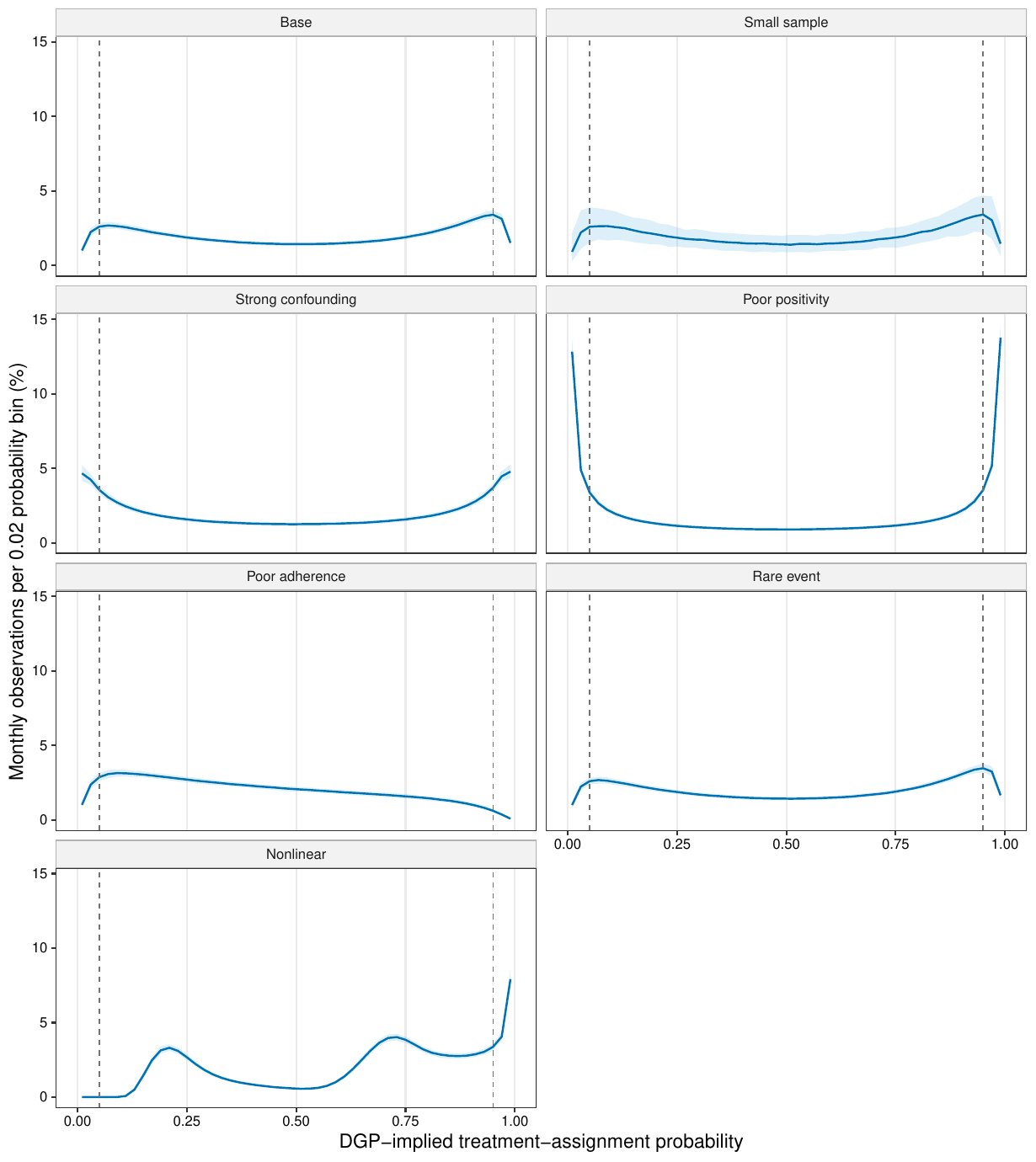}
  \caption{Distribution of the DGP-implied treatment-assignment
  probability across observed histories. Lines show the median percentage
  of monthly observations in each probability bin, and ribbons show empirical
  2.5th and 97.5th percentiles across 500 replications. Dashed vertical
  lines mark \(0.05\) and \(0.95\), the thresholds used to define
  practical positivity violations. DGP denotes the data-generating process;
  these probabilities were calculated directly from the known simulation
  treatment model.}
  \label{fig:appendix-positivity-distributions}
\end{figure}

\begin{figure}[p]
  \centering
  \includegraphics[width=0.8\textwidth]{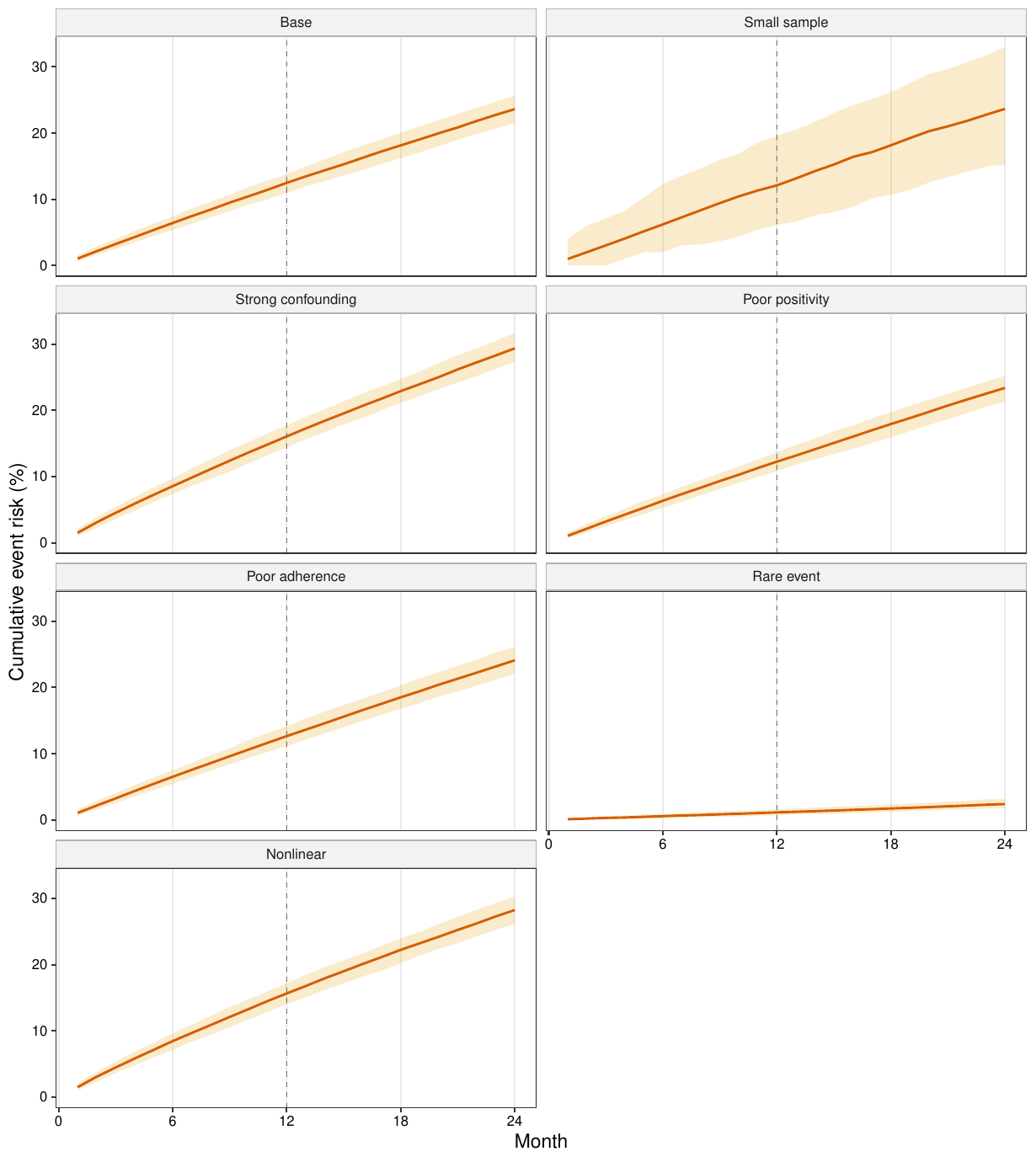}
  \caption{Observed cumulative event risk over the 24-month
  observational period. Lines show replication medians, and ribbons
  show empirical 2.5th and 97.5th percentiles across 500 replications.
  The dashed vertical line marks the 12-month target-trial follow-up
  horizon.}
  \label{fig:appendix-event-risk-curves}
\end{figure}

\clearpage
\section{Additional Simulation Weight Diagnostics}
\label{app:simulation-weight-diagnostics}

Figures~\ref{fig:appendix-weight-diagnostics},
\ref{fig:appendix-weight-max-distribution}, and
\ref{fig:appendix-weight-mean-distribution} summarize the stability of
the three explicit weight constructions. The IPCW values apply to
IPCW--Entry, IPCW--State, DR-IPCW--Entry, and DR-IPCW--State; the MDR
values likewise apply to the four MDR estimators. Longitudinal
TMLE--GLM is not shown because its cumulative treatment-and-censoring
mechanism diagnostics are not directly comparable with the row-level
weights used by the other estimators.

\begin{figure}[htbp]
  \centering
  \includegraphics[width=\textwidth]{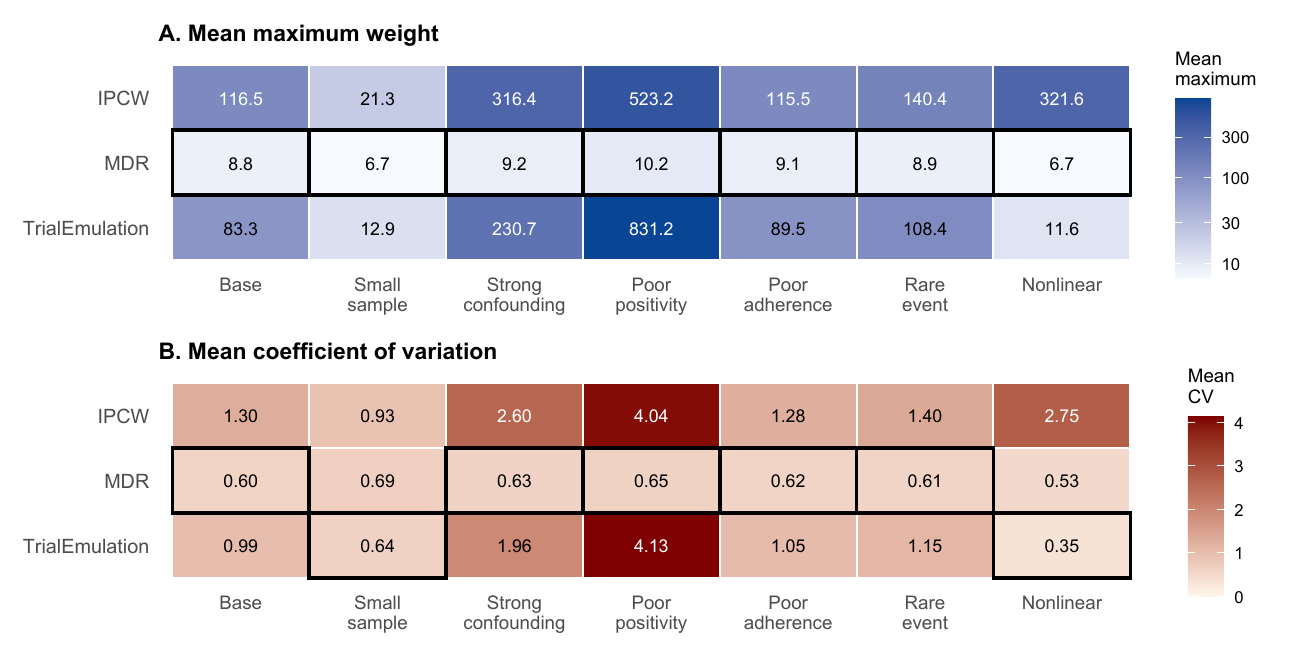}
  \caption{Stability of the analysis weights across the seven
  simulation scenarios. Entries are Monte Carlo means of the
  replication-specific maximum weight (Panel~A) and coefficient of
  variation (Panel~B), based on 500 replications per scenario. The
  color scale in Panel~A is logarithmic. Outlined cells identify the
  smallest value within each scenario. Each row represents one unique
  weight construction rather than every estimator using those weights.}
  \label{fig:appendix-weight-diagnostics}
\end{figure}

MDR had the smallest mean maximum weight in every scenario and its mean
coefficient of variation remained below \(0.70\). TrialEmulation had a
slightly smaller coefficient of variation than MDR in the small-sample
and nonlinear scenarios, but its mean maximum weight was larger. Both
IPCW and TrialEmulation showed pronounced weight instability under poor
positivity, whereas the MDR diagnostics changed little across
scenarios.

Figure~\ref{fig:appendix-weight-max-distribution} complements the Monte
Carlo means by showing the full distribution of the
replication-specific maximum weight. The logarithmic scale is needed
because the IPCW and TrialEmulation distributions are strongly
right-skewed.

\clearpage
\begin{figure}[!t]
  \centering
  \includegraphics[width=\textwidth]{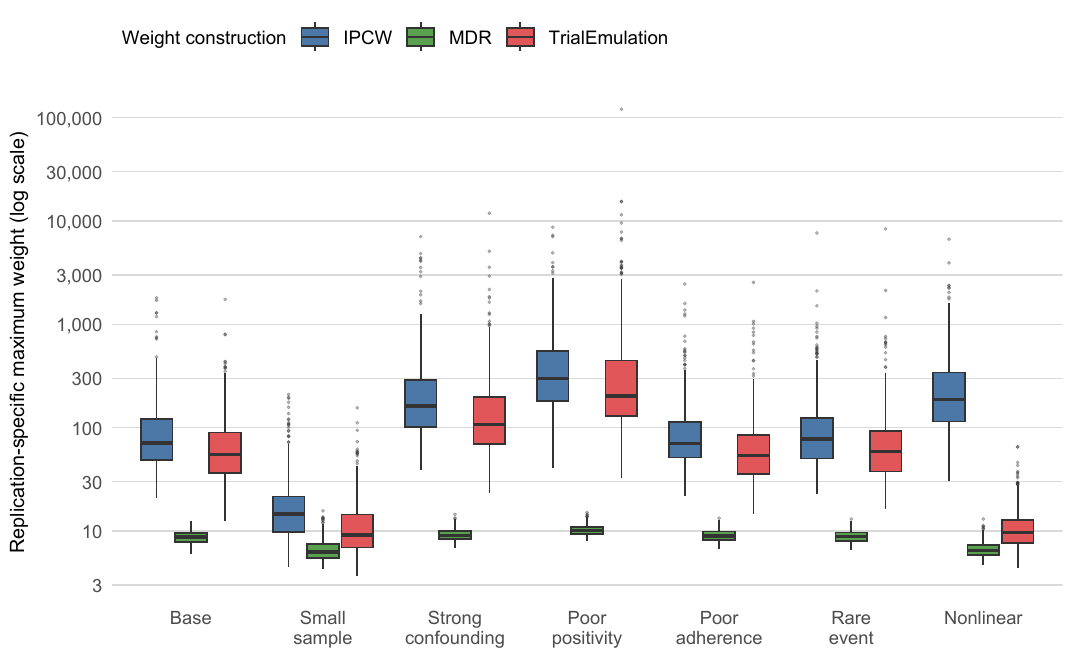}
  \caption{Distribution of the replication-specific maximum analysis
  weight across the seven simulation scenarios. Each boxplot summarizes
  500 replications; the box spans the interquartile range, the center
  line marks the median, the whiskers extend to the most extreme values
  within 1.5 interquartile ranges of the box, and points denote more
  extreme replications. The vertical axis is logarithmic. Colors denote
  the three unique weight constructions rather than every estimator
  using those weights.}
  \label{fig:appendix-weight-max-distribution}
\end{figure}

The MDR maximum weights were tightly concentrated in every scenario,
with medians from \(6.3\) to \(10.1\) and no replication-specific
maximum above \(15.7\). In contrast, IPCW and TrialEmulation had wide
upper tails in most scenarios. Under poor positivity, their median
maximum weights were \(299.0\) and \(201.7\), respectively, and the
most extreme maxima were \(8{,}710.4\) and \(120{,}863.6\), compared
with an MDR median of \(10.1\) and maximum of \(15.1\). Thus, the large
Monte Carlo mean maxima for IPCW and TrialEmulation reflect both higher
typical maxima and occasional extreme replications.

\clearpage
\begin{figure}[!t]
  \centering
  \includegraphics[width=\textwidth]{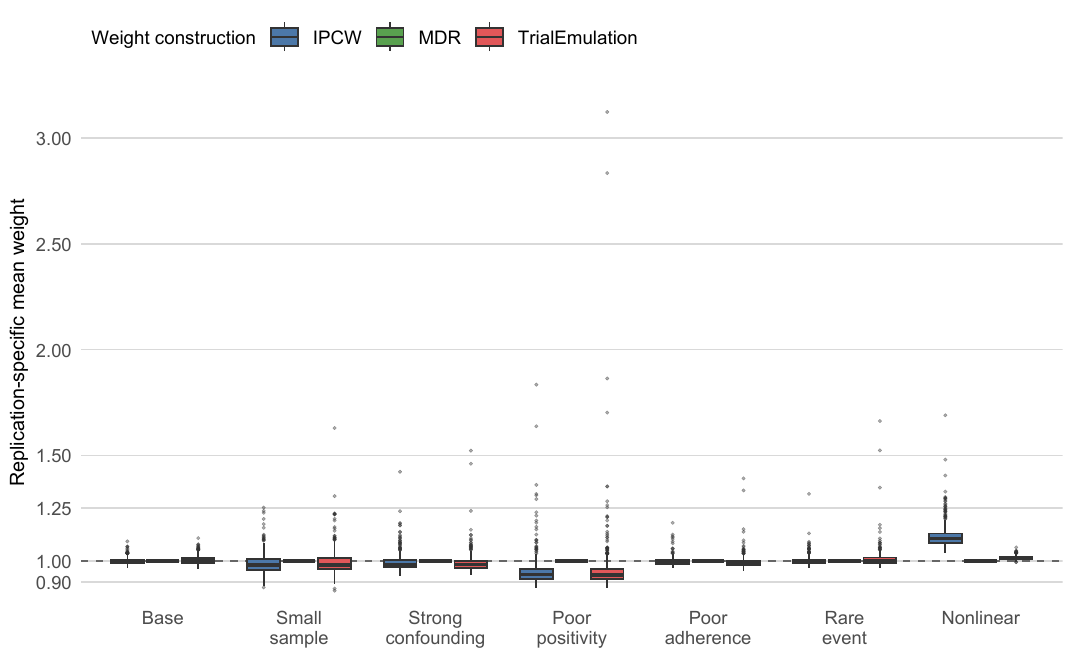}
  \caption{Distribution of the replication-specific mean analysis
  weight across the seven simulation scenarios. Each boxplot summarizes
  500 replications and follows the conventions described in
  Figure~\ref{fig:appendix-weight-max-distribution}. The vertical axis
  is linear, and the dashed horizontal line marks a mean weight of one.
  Colors denote the three unique weight constructions rather than every
  estimator using those weights.}
  \label{fig:appendix-weight-mean-distribution}
\end{figure}

By construction, the MDR mean weight equaled \(1.000\) in every
replication because normalization was performed separately within cells
defined by strategy and follow-up. IPCW and TrialEmulation also
generally had replication-specific means near one, although their
medians fell to \(0.935\) and \(0.934\), respectively, under poor
positivity. In the nonlinear scenario, the IPCW median rose to
\(1.106\), compared with \(1.015\) for TrialEmulation and \(1.000\) for
MDR. Occasional extreme replication means were also present: under poor
positivity, the largest mean was \(1.834\) for IPCW and \(3.124\) for
TrialEmulation. Thus, the mean-weight distributions reinforce the exact
normalization of MDR and reveal a small number of replications in which
the average IPCW or TrialEmulation weight departed substantially from
one.

%% file: scenario_diagnostics/table_scenario_diagnostics.tex
\begin{table}[!htbp]
\centering
\caption{Empirical characteristics of the simulation scenarios across 500 replications}
\label{tab:simulation-scenario-diagnostics}
\resizebox{\textwidth}{!}{%
\begin{tabular}{lrrrrrr}
\hline
Scenario &
\multicolumn{1}{c}{Participants} &
\multicolumn{1}{c}{$L_k\!\rightarrow\!A_k$ OR} &
\multicolumn{1}{c}{$L_k\!\rightarrow\!Y_{k+1}$ OR} &
\multicolumn{1}{c}{Extreme $p_A$ (\%)} &
\multicolumn{1}{c}{Continuation (\%)} &
\multicolumn{1}{c}{12-month risk (\%)} \\
\hline
Base & 2000 [2000, 2000] & 4.86 [4.73, 5.00] & 4.86 [4.73, 5.00] & 10.9 [10.3, 11.7] & 76.3 [75.5, 77.1] & 12.5 [11.0, 13.8] \\
Small sample & 100 [100, 100] & 4.86 [4.22, 5.52] & 4.86 [4.22, 5.52] & 10.8 [7.8, 14.4] & 76.1 [72.4, 79.5] & 12.1 [6.1, 19.6] \\
Strong confounding & 2000 [2000, 2000] & 9.92 [9.50, 10.38] & 9.92 [9.50, 10.38] & 21.9 [21.0, 22.8] & 75.3 [74.4, 76.1] & 16.1 [14.4, 17.7] \\
Poor positivity & 2000 [2000, 2000] & 43.51 [40.65, 46.84] & 4.52 [4.40, 4.66] & 40.3 [39.4, 41.4] & 75.1 [74.2, 76.0] & 12.2 [10.8, 13.6] \\
Poor adherence & 2000 [2000, 2000] & 4.89 [4.73, 5.04] & 4.89 [4.73, 5.04] & 5.5 [5.0, 6.1] & 55.5 [54.3, 56.7] & 12.6 [11.2, 14.1] \\
Rare event & 2000 [2000, 2000] & 4.88 [4.76, 5.02] & 4.88 [4.76, 5.02] & 11.2 [10.5, 11.9] & 76.5 [75.8, 77.3] & 1.1 [0.7, 1.6] \\
Nonlinear & 2000 [2000, 2000] & 3.86 [3.62, 4.15] & 3.13 [3.02, 3.27] & 13.7 [12.9, 14.6] & 81.8 [81.2, 82.4] & 15.7 [14.0, 17.2] \\
\hline
\end{tabular}%
}

\vspace{0.5em}
\parbox{\textwidth}{\footnotesize
Values are medians [empirical 2.5th, 97.5th percentiles] across 500
replications, except participants, which is median [minimum, maximum].
DGP denotes the data-generating process, and OR denotes odds ratio. The two
DGP-implied odds ratios compare the empirical third versus first quartile
of the time-varying confounder using the unclipped treatment and event DGP
linear predictors. Extreme $p_A$ is the percentage of monthly observations
with $p(A_k=1\mid\text{history})\leq0.05$ or $\geq0.95$. Continuation is
$\Pr(A_k=1\mid A_{k-1}=1)$. Cumulative event risk is the discrete-time
product-limit estimate through month 12.}
\end{table}